\def\P{\mbox{P}}
\renewcommand*{\backref}[1]{\ifx#1\relax \else Page #1 \fi}
\renewcommand*{\backrefalt}[4]{%
    \ifcase #1 \footnotesize{(Not cited.)}%
    \or        \footnotesize{(Cited on page~#2.)}%
    \else      \footnotesize{(Cited on pages~#2.)}%
    \fi}
\def\N{\mbox{N}}
\newtheoremstyle{named}{}{}{\itshape}{}{\bfseries}{.}{.5em}{\thmnote{#3's }#1}
\theoremstyle{named}
\theoremstyle{plain}
\newtheorem{theorem}{Theorem}
\newtheorem{lemma}{Lemma}
\newtheorem{definition}{Definition}
\newlength{\widebarargwidth}
\newlength{\widebarargheight}
\newlength{\widebarargdepth}
\long\def\@makecaption#1#2{
        \vskip 0.8ex
        \setbox\@tempboxa\hbox{\small {\bf #1:} #2}
        \parindent 1.5em  
        \dimen0=\hsize
        \advance\dimen0 by -3em
        \ifdim \wd\@tempboxa >\dimen0
                \hbox to \hsize{
                        \parindent 0em
                        \hfil
                        \parbox{\dimen0}{\def\baselinestretch{0.96}\small
                                {\bf #1.} #2
                                }
                        \hfil}
        \else \hbox to \hsize{\hfil \box\@tempboxa \hfil}
        \fi
        }
\long\def\comment#1{}
\definecolor{battleshipgrey}{rgb}{0.52, 0.52, 0.51}
\definecolor{darkgray}{rgb}{0.66, 0.66, 0.66}
\definecolor{darkgreen}{rgb}{0.0, 0.2, 0.13}
\definecolor{darkspringgreen}{rgb}{0.09, 0.45, 0.27}
\definecolor{dukeblue}{rgb}{0.0, 0.0, 0.61}
\definecolor{olivedrab7}{rgb}{0.24, 0.2, 0.12}
\definecolor{darkblue}{rgb}{0.0, 0.0, 0.55}
\definecolor{darkscarlet}{rgb}{0.34, 0.01, 0.1}
\definecolor{candyapplered}{rgb}{1.0, 0.03, 0.0}
\definecolor{ao(english)}{rgb}{0.0, 0.5, 0.0}
\definecolor{applegreen}{rgb}{0.55, 0.71, 0.0}
\newcommand{\expectation}{\mathbb{E}}
\newcommand{\Var}{\text{Var}}
\newcommand{\expect}{\mathbb{E}}
\newcommand{\iidsim}{\overset{\mathrm{iid}}{\sim}}
\newcommand{\hattheta}{\hat{\theta}}
\theoremstyle{remark}
\newtheorem*{remark}{Remark}
\begin{document}

  
%




\title{A Bayesian Bootstrap for Mixture Models }
\author{Fuheng Cui \& Stephen G. Walker \\ \\
Department of Statistics and Data Sciences \\
The University of Texas at Austin\\
email: fuheng.cui@austin.utexas.edu, s.g.walker@math.utexas.edu
}

\date{}

\maketitle

\begin{abstract}

This paper proposes a new nonparametric Bayesian bootstrap for a mixture model, by developing the traditional Bayesian bootstrap. We first reinterpret the  Bayesian bootstrap, which uses the P\'olya-urn scheme, as a gradient ascent algorithm which associated one-step solver. The key then is to use the same basic mechanism as the Bayesian bootstrap with the switch from a point mass kernel to a continuous kernel. Just as the Bayesian bootstrap works solely from the empirical distribution function, so the new Bayesian bootstrap for mixture models works off the nonparametric maximum likelihood estimator for the mixing distribution. From a theoretical perspective, we prove the convergence and exchangeability of the sample sequences from the algorithm and  also illustrate our results with different models and settings and some real data.

\end{abstract}


\textbf{\textsl{Keywords:}} Asymptotic exchangeability; Bayesian nonparametrics; Score function; \\Stochastic gradient algorithm.



\section{Introduction}
\label{sec:introduction}
The Bayesian bootstrap (BB), introduced in \cite{Rubin_1981}, is a data driven, prior free, nonparametric posterior, which is a particular version of the Dirichlet process; see \cite{Ferguson_1973}. The derivation is quite straightforward and only requires the empirical distribution function of an observed sample.  
While the presentation given by \cite{Rubin_1981} is in terms of random weights assigned to the observations, a more instructive consideration is the corresponding P\'olya-urn scheme, which constructs a probability model $p(x_{n+1:\infty}\mid x_{1:n})$; more accurately $p(x_{n+1:\infty}\mid F_n)$, where $F_n$ is the empirical distribution function of the observed sample $x_{1:n}$. The P\'olya-urn model is described in terms of balls in urns; i.e. sample a ball, replace it and add one more of the same color. Unfortunately, this hides a nice principle about the BB which is that the  $p(x_{m+1:\infty}\mid x_{1:m})$ for any $m\geq n$ is sequential where the $x_{m+1}$ is sampled from the current empirical distribution function of $x_{1:m}$, and then the empirical itself, i.e. the current distribution estimator, is updated using this newly acquired sample. Since one can only sample an observed data point, the weights get updated, and a martingale property ensures convergence to the random weights, which are Dirichlet and given in \cite{Rubin_1981}.
By now there are a number of different extension to the foundational Bayesian bootstrap, including for massive datasets, and for intractable likelihoods. See, for example, \cite{Barrientos_2020} and \cite{Vo_2019} for some recent applications including references to other types of Bayesian bootstrap.

The aim of the present paper is to use the principle of sampling from the current estimator and updating it with the new sample, to extend the BB idea to a nonparametric mixing distribution. Specifically, we demonstrate how to extend the BB from an empirical distribution function to a mixing distribution. The BB is well known to assign random weights to the points of the empirical distribution function and are distributed according to a Dirichlet distribution with common parameters set to 1; see \cite{Ferguson_1973} and \cite{Rubin_1981}. Indeed, there is the connection with the Dirichlet process where the parameter is $n\,F_n$. 

However, it is not clear how the random weighting of the atoms of the empirical distribution in the case of the BB could be extended and developed to a mixing distribution, from which no observations have been directly observed. The motivation for a BB approach for a mixing distribution is apparent. Current Bayesian nonparametric inference for a mixing distribution involve either complicated Markov chain Monte Carlo algorithms, or algorithms which make forms of approximations; see, for example, \cite{Gershman_2012} and \cite{Favaro_2013}. 

To proceed to this problem, we first reconsider the BB via the P\'olya-urn scheme, which also requires some degree of reinterpretation. We then rewrite the sequence as a novel stochastic gradient algorithm and it is this construct which allows us to see how to move the BB idea to a mixing distribution.  

The set up is as follows. For the BB we start with the empirical distribution function 
$$F_n(x)=n^{-1}\sum_{i=1}^n 1(x_i\leq x)$$
where the $(x_i)$ are an observed sample of size $n$, and for the measure version we would use the $1(x=y)$ indicator function,
\begin{equation}
    1(x=y) = \left\{
    \begin{aligned}
        1, \quad x=y,\\
        0, \quad x\neq1.
    \end{aligned}
    \right.
\end{equation}
A random distribution, or probability measure, taken from the BB, equivalently a data dependent, i.e. prior free, posterior is given by
$$F(x)=\sum_{i=1}^n w_i\,1(x=x_i)$$
where the $w$ is Dirichlet with common parameter 1, that is, $p(w)\propto 1\left(\sum_{i=1:n}w_i=1\right)$.

The random distribution can also be generated by the P\'olya-urn scheme. As we have indicated previously,  we write in the language of sampling from the empirical distribution rather than the taking of balls from urns. So $p(x_{n+1:\infty}\mid x_{1:n})$ arises by taking $x_{n+1}$ from $F_n$, then update the empirical distribution with the new sample to get
$$F_{n+1}(x)=\frac{n\,F_{n}(x)+1(x=x_{n+1})}{n+1}.$$
This process is repeated, so sample $x_{n+2}$ from $F_{n+1}$ and construct $F_{n+2}$ in the obvious way, and so on. It is seen the sampling from the empirical and updating using the new sample is equivalent to sampling a ball from an urn, replacing it and adding another of the same kind of ball. For us it is important to the see scheme as sampling and updating the empirical distribution function in order for us to extend the BB to a mixing distribution.

The limit $F_\infty$ exists (see for example \cite{Berti2006}) and is a random distribution function. This is easy to see since for each $x$ it is that $(F_m(x))_{m>n}$ forms a bounded martingale. This follows since
$$E\,[1(x=x_{m+1})\mid F_m]=F_m(x).$$
If we write the limiting distribution function as $F_\infty$ then it can be shown that the weights are coming from the aforementioned Dirichlet distribution. An easier view on this is to consider the sequence of random weights $(w_{i,m})_{i=1:n}$ where $w_{i,n}=1/n$. Then
$$w_{i,m+1}=\frac{m\,w_{i,m}+1(x_{m+1}=x_i)}{m+1},$$
and recall $P(x_{m+1}=x_i\mid w_m)=w_{i,m}$. It is now convenient to write this as
\begin{equation}\label{weights}
w_{i,m+1}=w_{i,m}+\alpha_{i,m}\left[\frac{1(x_{m+1}=x_i)}{w_{i,m}}-1\right],
\end{equation}
where $\alpha_{i,m}=w_{i,m}/(m+\alpha)$ and $\alpha=1$. This can be seen to be in the form of a stochastic gradient algorithm.

\subsection{BB as a stochastic gradient algorithm}

Stochastic gradient algorithms are, in the most simplest case, of the form 
$$v_{m+1}=v_m+\alpha_m\,s(v_m,x_{m+1})$$
for some stochastic sequence $(v_m)$ which arises due to the random sequence $(x_m)$. 

For these kind of algorithms, first note that the $(\alpha_{i,m})$ in (\ref{weights}) satisfy the usual conditions in~\cite{Robbins_1951}, namely
$$\sum_{m}\alpha_{i,m}=\infty\quad\mbox{and}\quad \sum_{m}\alpha_{i,m}^2<\infty\quad\mbox{for each}\quad i.$$
Further, the term in the square brackets of (\ref{weights}) has zero expectation, ensuring that the sequence $(w_{i,m})$ is a martingale for each $i=1:n$. We can see the term as the gradient of a particular objective function. To see this, consider
$$l(w)=\log\left( \sum_{j=1}^n w_{j,m} 1(x_{m+1}=x_j)\right)\,+\lambda\left(\sum_{j=1}^m w_{j,m}-1\right),$$
where the $l(w)$ is the log-likelihood and subject to the constraint that the sum of the weights is 1, and hence the Lagrange multiplier. Now
$$\frac{\partial l}{\partial w_i}=\frac{1(x_{m+1}=x_i)}{w_{i,m}}+\lambda.$$
Setting this to 0, we get $1(x_{m+1}=x_i)=-\lambda w_{i,m}$, so we see that we must take $\lambda=-1$. Hence, the term in square brackets in (\ref{weights}) is $\partial l/\partial w_i$. 
It is this algorithm and its motivation which we will develop for a mixing distribution.

The BB just described produces an exchangeable sequence for the $(x_{n+1}:\infty)$. This is a nice outcome, though we would, even if it was not such an outcome, sample from and update the empirical distribution function. The thinking is that we would use what we regard as the best distribution for $x_{m+1}$ given $x_{1:m}$ at any point in time. If we replaced the empirical distribution, which is discrete, and which can be considered as a drawback, by a continuous distribution, the sequence of $(x_{n+1:\infty})$ may not be exchangeable. However, if done in such a way so the sequence of updated distribution functions forms a martingale, then the sequence will be conditionally identically distributed (c.i.d.), a notion which relaxes exchangeability, and was originally studied by \cite{Berti_2004}. This was the theme of \cite{Fong_2021}. We aim to combine all these elements to present a BB for the mixing distribution for which the sequence of future unobserved $x_{n+1:\infty}$ are either c.i.d. or asymptotically exchangeable.  

\subsection{Nonparametric mixture model}
\label{sec:overview_of_nonparametric_estimates}

The mixture model has density function given by
\begin{align}
\label{def:mixture_model}
    f(y;G)=\int_{\Omega} k(y;\theta)\,dG(\theta),
\end{align}
where $k(y;\theta)$ is a known kernel with $\int k(y;\theta)\mathrm{d}y=1$, for all $\theta \in \Omega \subset \mathbb{R}^d$. To reinforce the difference between the model based on the BB we will now represent data as $y$ rather than $x$. Based on a sample of size $n$, the nonparametric maximum likelihood estimator (NPMLE) exists under the mild regularity condition that $k$ is bounded. We write this as $\widehat{G}$ and the existence and uniqueness and the discreteness is detailed in \cite{Lindsay_1983}. Indeed, $\widehat{G}$ is discrete with at most $n$ atoms.  

The likelihood function for $G$ is given by
\begin{align}
    \label{equ:npmle_problem}
    l(G)=\prod_{i=1}^{n} \int_{\Omega} k\left(y_{i}; \theta\right) \,\mathrm{d}G(\theta).
\end{align}
The full details for the existence and discreteness of the NPMLE is provided in the following:

\begin{theorem}[Existence and Discreteness of NPMLE]
    \label{thm:discrete_NPMLE}
    If $\Gamma=\{k(\cdot;\theta)\mid \theta\in\Omega\}$ is closed and bounded, then there exists a $\widehat{G}(\theta)$ which maximizes $l(G)$ in (\ref{equ:npmle_problem}), and $\widehat{G}$ can be written as
    $$
    \widehat{G}(\theta)=\sum_{j=1}^r \pi_j 1(\theta=\theta_j),
    $$
    where $\sum_{j=1}^r \pi_j = 1$ and  $r\leq n$.
\end{theorem}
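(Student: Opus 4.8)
The plan is to follow the classical convex-geometry argument (due to Lindsay): translate the infinite-dimensional maximization over mixing distributions into a finite-dimensional problem in $\mathbb{R}^n$, where $n$ is the sample size. For $\theta\in\Omega$ I would write $\kappa(\theta):=(k(y_1;\theta),\dots,k(y_n;\theta))\in\mathbb{R}^n$, and for a probability measure $G$ on $\Omega$ let its \emph{likelihood vector} be $L(G):=\int_\Omega\kappa(\theta)\,dG(\theta)=(f(y_1;G),\dots,f(y_n;G))$. The hypothesis that $\Gamma=\{k(\cdot;\theta):\theta\in\Omega\}$ is closed and bounded is used precisely to guarantee that $\Gamma^*:=\{\kappa(\theta):\theta\in\Omega\}$ is a compact subset of $\mathbb{R}^n$ (it is the image of $\Gamma$ under the continuous evaluation map $g\mapsto(g(y_1),\dots,g(y_n))$). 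A standard fact about barycenters of probability measures on a compact set then shows that the set of achievable likelihood vectors $\mathcal{L}:=\{L(G):G\text{ a probability measure on }\Omega\}$ equals the convex hull $\mathrm{conv}(\Gamma^*)$, which is compact and convex. Since $l(G)=\prod_{i=1}^n f(y_i;G)$, maximizing $l$ over $G$ becomes: maximize $\Phi(v):=\prod_{i=1}^n v_i$, equivalently the concave function $v\mapsto\sum_{i=1}^n\log v_i$, over $v\in\mathcal{L}$.

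Existence is then immediate: $\Phi$ is continuous and $\mathcal{L}$ is compact, so the maximum is attained at some $\hat v\in\mathcal{L}$. Excluding the degenerate case in which some observation $y_i$ satisfies $k(y_i;\theta)=0$ for all $\theta$ — which I would rule out by a mild standing assumption, as such data are impossible under the model — one can take, e.g., the $n$-point average $G_0=n^{-1}\sum_i\delta_{\theta_i}$ with $k(y_i;\theta_i)>0$ to see that $\Phi$ attains a strictly positive value, so $\hat v$ lies in the open positive orthant. Any probability measure $\widehat G$ with $L(\widehat G)=\hat v$ is then a maximizer of $l$.

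For discreteness I would combine a supporting-hyperplane argument with Carathéodory's theorem. Because $v\mapsto\sum_i\log v_i$ is concave on the positive orthant and $\hat v$ maximizes it over the convex set $\mathcal{L}$, the first-order optimality condition gives $\langle c,v-\hat v\rangle\le 0$ for all $v\in\mathcal{L}$, where $c:=(1/\hat v_1,\dots,1/\hat v_n)$; since $\langle c,\hat v\rangle=n$ this reads $\langle c,v\rangle\le n$ for all $v\in\mathcal{L}$, and in particular $\sum_{i=1}^n k(y_i;\theta)/\hat v_i\le n$ for every $\theta\in\Omega$. Now, for \emph{any} maximizing $\widehat G$, integrating this inequality against $\widehat G$ yields $n=\langle c,\hat v\rangle=\int_\Omega\big(\sum_i k(y_i;\theta)/\hat v_i\big)\,d\widehat G(\theta)\le n$, forcing equality $\widehat G$-a.e.; hence $\widehat G$ is supported on the contact set $\Omega^*:=\{\theta:\sum_i k(y_i;\theta)/\hat v_i=n\}$. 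Consequently $\hat v$ is the barycenter of a probability measure carried by the compact set $\Gamma^*\cap H$, where $H:=\{v\in\mathbb{R}^n:\langle c,v\rangle=n\}$ is an affine hyperplane of dimension $n-1$, so $\hat v\in\mathrm{conv}(\Gamma^*\cap H)$. Applying Carathéodory's theorem \emph{inside} the $(n-1)$-dimensional affine space $H$ then writes $\hat v=\sum_{j=1}^r\pi_j\,\kappa(\theta_j)$ as a convex combination of at most $r\le n$ points of $\Gamma^*\cap H$, with $\pi_j\ge 0$ and $\sum_j\pi_j=1$. Taking $\widehat G=\sum_{j=1}^r\pi_j\,1(\theta=\theta_j)$ gives $L(\widehat G)=\hat v$, so $l(\widehat G)=\Phi(\hat v)$ is the maximum and $\widehat G$ is discrete with $r\le n$ atoms, completing the proof.

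The main obstacle lies in two linked geometric points. First, the identification of the feasible region $\mathcal{L}$ with $\mathrm{conv}(\Gamma^*)$: this is exactly where closedness and boundedness of $\Gamma$ is needed (to make $\Gamma^*\subset\mathbb{R}^n$ compact), and it also relies on the measure-theoretic fact that barycenters of probability measures on a compact subset of $\mathbb{R}^n$ exhaust its convex hull. Second, the sharpening of the generic Carathéodory bound from $n+1$ down to $n$: the trick is to restrict first to the supporting hyperplane $H$ at the optimum (using the first-order condition) and only then apply Carathéodory within the lower-dimensional set $\Gamma^*\cap H$. A comparatively minor subtlety is the degenerate zero-likelihood case, which I would dispose of by a standing assumption (or by simply discarding any impossible coordinate).
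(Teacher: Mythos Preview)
The paper does not supply its own proof of this theorem; it is stated as background and explicitly attributed to \cite{Lindsay_1983}. Your argument is exactly Lindsay's classical convex-geometry proof and is correct, including the refinement from $n+1$ to $n$ atoms via the supporting hyperplane at the optimum before invoking Carath\'eodory. One minor remark: in Lindsay's formulation the set $\Gamma$ already denotes the likelihood curve $\{(k(y_1;\theta),\dots,k(y_n;\theta)):\theta\in\Omega\}\subset\mathbb{R}^n$, so ``closed and bounded'' yields compactness directly by Heine--Borel; your detour through an abstract function space (where closed and bounded need not imply compact) is unnecessary and best avoided --- simply take $\Gamma=\Gamma^*$ from the outset.
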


There are a number of methods to compute the NPMLE, some of which can be easy and fast, see for example~\cite{Koenker_2014} and \cite{Chae_2018}. In \cite{Chae_2018}, the authors use Bayesian ideas to construct an iterative algorithm to find the NPMLE. Specifically, assume $G_0$ is a starting  distribution, then the update is given by the average of posteriors:
\begin{equation}
\label{equ:iterative_algorithm}
    G_{t+1}(\theta)=\frac{1}{n}\sum_{i=1}^n \frac{\int_{s\leq \theta} k(y_i;s)\,dG_t(s)}{\int_{\Omega}k(y_i;s)\mathrm{d}G_t(s)}.
\end{equation}
In \cite{Chae_2018} it is shown that the sequence $(G_t)$ converges to $\widehat{G}$.
Our aim is to use $\widehat{G}$ to generate the BBM, just like the $F_n$ generates the BB, and hence Bayesian uncertainty quantification for the mixing distribution. When the NPMLE is not available, we use alternative estimators, which will be detailed later in the paper.

Describing the layout of the paper: The bootstrap algorithms derived from the P\'olya-urn scheme and Bayesian Martingale scheme are shown in Section~\ref{sec:The Bayesian Bootstrap for Mixture Models (BBM)}. In Section~\ref{sec:convergence_and_exchangeability} we show some theoretical properties of our algorithm, including convergence and asymptotic exchangeability. Section~\ref{sec:illustration} presents some illustrations for some different models and settings. We conclude with a brief discussion and summary in Section~\ref{sec:summary_and_discussion}.

\section{The Bayesian bootstrap for mixture models (BBM)}
\label{sec:The Bayesian Bootstrap for Mixture Models (BBM)}

The bootstrap and the Bayesian bootstrap are fundamental tools for providing \emph{Uncertainty Quantification} (UQ)
about an estimator. Specifically, in the nonparametric case, uncertainty with respect to the empirical distribution function as an estimator of the true distribution. 

As we have seen with the BB, the basic idea is as follows. Start with an estimator of the distribution, say $\widehat{F}_n$. To provide UQ about this estimator, sample from $\widehat{F}_n$ to get $x_{n+1}$ and then update $\widehat{F}_n$ to $\widehat{F}_{n+1}$ using the new sample $x_{n+1}$.  

Our starting point is the P\'olya-urn model and we will study this in further detail to see how we can extend to other nonparametric models, specifically the mixture model.
The P\'olya-urn model is an original scheme in Bayesian nonparametric statistics, see \cite{Ghosal_2010}. 
We already note that it is connected with the Dirichlet process, see \cite{Blackwell_1973}, and is also related to other urn models, such as \cite{Hoppe_1984}. 

Applying the P\'olya-urn scheme, equivalently the BB, on an empirical distribution function $\widehat{F}_n$ has the outcome of randomly weighting the mass on each data point. 
The starting weights are equally $w_{0,i}=1/n$ for $i=1,\ldots,n$. If now the current weights at iteration $m$ are $w_{m,i}$ then the updated weights can be written as
\begin{equation}\label{Polya_urn}
w_{m+1,i}=w_{m,i}+\alpha_m\big(1(x_{m+1}=x_i)-w_{m,i}\big),
\end{equation}
for $m\geq 0$.
The term in brackets has expectation 0 since $\P(x_{m+1}=x_i\mid x_{1:m})=w_{m,i}$. Hence the $(w_{m,i})$ is a martingale sequence and hence converges for each $i$ to $w_{\infty,i}$.  

Specifically, even though more data are being generated, the number of atoms is fixed, it is only the weights which are being randomized.

Suppose now we move to the mixture model and the corresponding NPMLE, based on a  finite sample set $(y_1,\ldots,y_n)$, independently and identically distributed drawn from (\ref{def:mixture_model}). 
We write the NPMLE as
\begin{equation}
\label{equ:NPMLE_form}
    \widehat{G}(\theta)=\sum_{j=1}^r \pi_j 1(\theta=\theta_j)\quad \text{ with } \quad \sum_{i=1}^r \pi_i = 1,\quad\mbox{and} \quad r\leq n.
\end{equation}
The corresponding data density estimator is given by
$$\widehat{f}_n(y)=\sum_{j=1}^r \pi_j\, k(y\mid\theta_j).$$
One way to look to see how to proceed with an adaption of the BB is to see that the form for $\widehat{f}_n$ returns the empirical mass function when we exchange the kernel for a point mass at the $(\theta_j)$, which become the data points, and so $r=n$. 

To develop the BB for the mixture model, we would start by sampling $y_{n+1}$ from $\widehat{f}_n$. We then need to update $\widehat{f}_n$ to $\widehat{f}_{n+1}$ using $y_{n+1}$. The case is made that we only need to update the weights, i.e. the $(\pi_j)$, and the locations, i.e. the $(\theta_j)$, which is equivalent to an update of the current $\widehat{G}$. This is in keeping with the BB in that the weights get updated, and given the introduction of the kernel, so the parameters get updated, which do not exist with the BB. We argue that the number of atoms $r$ does not need to be updated, just as it is not updated either with the BB. All the new data are generated from the current kernels and so no new kernel is required. Hence, no new kernel location is required, so $r$ stays at it is. Therefore, we do not need new atoms to explain $(y_{n+1}:\infty)$. If we only update the weights and leave the locations fixed, then the sequence $(y_{n+1:\infty})$ is easily seen to be c.i.d.; whereas if we update both weights and locations then, as we shall prove later, the sequence is asymptotically exchangeable. 

To see how to update the weights and the kernel parameters, we set up a stochastic gradient algorithm for the mixture model.
To this end we consider the one-step log-likelihood for the weights with the constraint; i.e.
\begin{equation}\label{equ:polya_opt_constraint}
    \max_{w,\theta}\,  \log\left[\sum_{j=1}^r w_{j} k(y_{n+1}\mid \theta_j)\right]
    \quad \mbox{subject to}\quad  \sum_{j=1}^r w_{j}=1,
\end{equation}
using for now the $y_{n+1}$ data point which we have already described as to how it is obtained. Using Lagrange multipliers the optimization function becomes
\begin{align}
    \label{equ:log-likelihood_location_update}
    l(w,\theta) = \log \left[\sum_{j=1}^r w_j k(y_{n+1},\theta_j)\right]-\left(\sum_{j=1}^r w_j-1\right),
\end{align}
where $w=(w_1,\ldots,w_r)$ and $\theta=(\theta_1,\ldots,\theta_r)$. The derivatives of interest here are
\begin{align*}
    \frac{\partial l}{\partial w_j}=\frac{k(y_{n+1}\mid\theta_j)}{\sum_{i=1}^r w_i\,k(y_{n+1}\mid\theta_i)}-1,\quad
    \nabla_{\theta_j}l = \frac{w_j}{\sum_{i=1}^r w_j\,k(y_{n+1}\mid\theta_i)} \nabla_{\theta}k(y_{n+1}\mid\theta_i),
\end{align*}
for all $j\in\{1,\ldots,r\}$. 

\begin{algorithm}[ht!]
\DontPrintSemicolon
  \KwInput{Data collected $\{y_1,\ldots, y_n\}$, the estimate $G_0(\theta)=\sum_{j=1}^r w_{j,0}1(\theta=\theta_{j,0})$, the step sizes $\alpha_{j, m}=w_{j, m}/(m+n)$, $\beta_{j, m}= g\left(y_{n+m+1}, w_{m}, \theta_{m}\right)/(m+n)$ and $M$ is the number of posterior samples.}
  \For{$k= 0, 1,2,\ldots,M$}{
  \For{$m=0,1,\ldots$}{
  Sample $y_{m+n+1}\sim p_{m}(y)=\int k(y\mid\theta)\, dG_{m}(\theta)$;\\
  Update $w_{j,m+1}=w_{j, m}+\alpha_{j, m+1}\left(\frac{k\left(y_{n+m+1}, \theta_{j, m}\right)}{\sum_{i=1}^{r} w_{i, m} k\left(y_{n+m+1}, \theta_{i, m}\right)}-1\right);$\\
  Update $\theta_{j, m+1}=\theta_{j, m}+\beta_{j, m+1}\left(\frac{w_{j, m}}{\sum_{i=1}^{r} w_{i, m} k\left(y_{n+m+1}, \theta_{i, m}\right)} \nabla_{\theta_{j}} k\left(y_{n+m+1}, \theta_{j, m}\right)\right);$\\
  Set $G_{m+1}=\sum_{j=1}^r w_{j,m+1}1(\theta=\theta_{j,m+1}).$
  }
  Set $\hattheta_k=\theta_{\infty}$, $\hat{w}_k=w_{\infty}$
  }
  \KwOutput{$\{(\hattheta _k,\hat{w}_k)\}_1^M$}
\caption{The Bayesian Bootstrap for Mixture Models (BBM)}
\label{algo:The Bayesian Bootstrap for Mixture Models (BBM)}
\end{algorithm}

So the update rule is
\begin{align}
w_{j,m}&= w_{j,m-1}+\alpha_{j,m}\left[\frac{k(y_m,\theta_{j,m-1})}{\sum_{i=1}^r w_{i,m-1} k(y_m,\theta_{i,m-1})}-1\right],\label{equ:update_rule_weights}\\
\theta_{j,m}&= \theta_{j,m-1}+\beta_{j,m}\frac{w_{j,m-1}\nabla_{\theta_j}k(y_{m},\theta_{j,m-1}) }{\sum_{i=1}^r w_{i,m-1}k(y_{m},\theta_{i,m-1})},\label{equ:update_rule_locations}
\end{align}
where $\alpha_{j,m}=\eta_m\, w_{j,m-1}, \beta_{j,m}=\eta_m\,g(w_{m-1},\theta_{m-1})$ is the step size, $\eta_m=1/(m+n+1)$ and $g$ is a to be assigned function. In summary, $\widehat{G}$ is the estimator and the uncertainty about it is derived by randomizing the weights and the kernel parameters by implementing the algorithm (\ref{equ:update_rule_weights}) and (\ref{equ:update_rule_locations}) from $m=n+1$ to a limit for which convergence for each $j$ is attained.  It is a natural extension of the BB which starts with the empirical distribution function and also randomizes the weights and the atoms using an algorithm which replaces the kernel with a single point at each atom.
A summary is provided in Algorithm~\ref{algo:The Bayesian Bootstrap for Mixture Models (BBM)}.

\section{Convergence and asymptotic exchangeability}
\label{sec:convergence_and_exchangeability}

In this section we establish properties of the BBM Algorithm~\ref{algo:The Bayesian Bootstrap for Mixture Models (BBM)} for constructing the UQ about the $\widehat{G}$. In particular, we are required to show the existence of $w_\infty$ and $\theta_{\infty}$, and that this implies the existence of a $G_\infty$ and a 
$f_\infty(y)=\int k(y\mid\theta)\,dG_\infty(\theta)$.

\subsection{Convergence}
\label{subsec:convergence}
The convergence of the (\ref{equ:update_rule_weights}) and (\ref{equ:update_rule_locations}) relies on it being a martingale, hence here we review the Doob's martingale theorem for a sequence $(z_m)$. 

\begin{theorem}[Doob's Martingale Convergence Theorem]
\label{thm_Doob}
If $(z_m)$ is a supermartingale, and
$\sup_m E\,|z_m|<\infty$, then $\lim_{m\to\infty}z_m=z_\infty$ exists almost surely and $z_\infty$ is finite in expectation.
\end{theorem}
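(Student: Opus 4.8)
The plan is to prove this through Doob's upcrossing inequality, which is the standard route. Fix two rationals $a<b$, and for the finite sequence $z_0,\dots,z_N$ let $U_N([a,b])$ denote the number of completed upcrossings of the interval $[a,b]$, i.e. the number of excursions of the process from a value $\le a$ to a subsequent value $\ge b$. First I would introduce the predictable $\{0,1\}$-valued ``gambling strategy'' $H_m$ which is switched on at the first time after the process enters $(-\infty,a]$ and switched off at the first subsequent time it enters $[b,\infty)$; since $H_m$ is determined by $z_0,\dots,z_{m-1}$, it is predictable, nonnegative, and bounded.

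Next I would consider the discrete stochastic integral $(H\cdot z)_N := \sum_{m=1}^N H_m (z_m - z_{m-1})$ and extract two facts. On the one hand, because $H$ is predictable, nonnegative and bounded while $(z_m)$ is a supermartingale, $(H\cdot z)_N$ is itself a supermartingale started at $0$, so $E[(H\cdot z)_N]\le 0$. On the other hand, pathwise every completed upcrossing contributes at least $b-a$ to $(H\cdot z)_N$, whereas the single possibly-incomplete terminal excursion contributes at least $-(z_N-a)^-$; hence $(H\cdot z)_N \ge (b-a)\,U_N([a,b]) - (z_N-a)^-$. Taking expectations and using $(z_N-a)^- \le |a| + |z_N|$ yields the upcrossing bound
$$ (b-a)\,E\big[U_N([a,b])\big] \;\le\; |a| + \sup_m E|z_m| \;<\; \infty. $$

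With the upcrossing bound established the rest is routine. Letting $N\to\infty$ and invoking monotone convergence gives $E[U_\infty([a,b])]<\infty$, hence $U_\infty([a,b])<\infty$ almost surely, for each fixed pair $a<b$. The event $\{\liminf_m z_m < \limsup_m z_m\}$ is contained in the countable union $\bigcup_{a<b,\ a,b\in\mathbb{Q}}\{U_\infty([a,b])=\infty\}$, a countable union of null events, so it has probability zero; therefore $z_m \to z_\infty$ almost surely for some $z_\infty$ valued in $[-\infty,\infty]$. Finally, Fatou's lemma gives $E|z_\infty| = E[\liminf_m |z_m|] \le \liminf_m E|z_m| \le \sup_m E|z_m| < \infty$, so $z_\infty$ is finite almost surely and integrable.

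The main obstacle is the upcrossing inequality itself — in particular, verifying cleanly that the strategy integral $(H\cdot z)$ inherits the supermartingale property (this is exactly where predictability and $H\ge 0$ enter) together with the pathwise lower bound in terms of $U_N$. Once that inequality is in hand, the almost-sure convergence and the $L^1$ control of the limit are soft consequences of monotone convergence, a countable union of null sets, and Fatou's lemma.
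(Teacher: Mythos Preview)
Your argument is correct and is the standard route to Doob's convergence theorem via the upcrossing inequality: the predictable gambling strategy yields a supermartingale discrete integral with nonpositive expectation, the pathwise lower bound gives the upcrossing estimate, and then monotone convergence, a countable union over rational pairs $a<b$, and Fatou's lemma finish the job. The steps are all sound, including the bound $(z_N-a)^- \le |a|+|z_N|$ and the conclusion $E|z_\infty|<\infty$.

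There is nothing to compare against, however: the paper does not prove this theorem. It is quoted as a classical result (``here we review the Doob's martingale theorem'') and used as a black box to guarantee convergence of the weight and location martingales in the BBM algorithm. So your proposal goes well beyond what the paper does; it supplies a full proof where the paper simply cites the statement.
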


The weights process converges since the martingale for each of them is clearly bounded. Hence, with probability 1, $w_\infty$ exists. However, for the new martingale, compared to the BB, we need to show convergence of the parameters.

First, in order to have zero-expectation score functions and hence a martingale, we need some regularity conditions; see, for example, \cite{Serfling_2009}. Let $\Theta$  be an interval in $\mathbb{R}^d$, and the update rule from (\ref{equ:update_rule_locations}). 
We assume:
\begin{enumerate}[label=(R.1),align= left]
\item \label{regularity:existance} 
For each $\theta \in \Theta$, the gradient $\nabla_{\theta} k(y,\theta)$ exists, for all $y$.
\end{enumerate}

\begin{enumerate}[label=(R.2),align= left]
\item \label{regularity:differentiable} 
There exists a function \(\phi(y)\), such that for each $s=1,\ldots,d$, $\left|\partial k(y , \theta)/\partial \theta_s\right| \leq \phi(y)$
holds for all $y$ and \(\theta \in \Theta\), and $\int \phi(y) d y<\infty$.
\end{enumerate}
If Conditions \ref{regularity:existance} and~\ref{regularity:differentiable} are satisfied, then by the Lebesgue dominant convergence theorem, we have 
$\int \nabla_{\theta}k(y,\theta)\mathrm{d}y=0,$
and so $(\theta_m)_{m>n}$ is a martingale.

The variance of $\theta_m$ is studied component-wise and so we focus on computing  $\mbox{Var}\,(\theta_{j,m}\mid y_{1:n})$. To this end we note that this is easily seen to be
$$\mbox{Var}\,(\theta_{j,m}\mid y_{1:n})=\sum_{i=n}^{m-1}
\beta_{j,i}^2\,w_{j,i}^2\,\int\frac{[k'(y\mid\theta_{j,i})]^2}{p_{i}(y)}\,dy.
$$
Now $p_i(y)$ is lower bounded by $w_{j,i}\,k(y\mid\theta_{j,i})$ and hence the integral is upper bounded by $I(\theta_{j,i})$, where $I(\theta)$ is the Fisher information for $k$ evaluated at $\theta$. Therefore, 
$$\mbox{Var}\,(\theta_{j,m}\mid y_{1:n})\leq \sum_{i=n}^{m-1}
\beta_{j,i}^2\,w_{j,i}\,I(\theta_{j,i}).$$
In order to determine the variance, we consider two separate cases.  The first is when $\Theta$ is compact and the second when $\Theta$ is not compact.
When $\Theta$ is compact, and we can assume this is arbitrarily large so the sequences $(\theta_{j,m})$ remain inside, we assume that $I(\theta)$ is continuous on $\Theta$. Hence, by the extreme value theorem, $I(\theta)$ is bounded on $\Theta$.  Taking $g\equiv 1$ we see that the variance as $m\to\infty$ is bounded above by
$\sum_{i=1}^\infty \max_{\Theta} I(\theta)/(i+n)^2<\infty$.

When $\Theta$ is no longer compact, we can no longer assume that the Fisher information is bounded. This is where we adapt the step-size to include $g$. Specifically, we take
$$g(w_{j,m},\theta_{j,m})=\frac{1}{\sqrt{I(\theta_{j,m})\,I(\theta_{j,n})\,w_{j,m}}}.$$
Now we can see that the variance in the limit as $m\to\infty$ is upper bounded by
$$\sum_{i=1}^\infty \frac{1}{(i+n)^2}\frac{1}{I(\theta_{j,n})},$$
which is approximately $\{n\,I(\theta_{j,n})\}^{-1}$, which is an objective type of variance for a posterior distribution. 

Under further assumptions, we have weak convergence of $P_m(y)=\int K(y,\theta) d\,G_m(\theta)$ almost surely, where $K$ is the distribution function corresponding to density $k$. For the following, set 
$$\theta_m=(\theta_{1,m},\ldots,\theta_{r,m})\in\mathbb{R}^{d\times r} \text{\quad and\quad} w_{m}=(w_{1,m},\ldots,w_{r,m})\in\mathbb{R}^{1\times r}.$$
We use the following result from \cite{Berti2006}: If $\mu_m$ is a sequence of random probability measures and $\mu_m(A)\to \mu_\infty(A)$ almost surely for all relevant sets $A$, and the sequence of probability measures $E\,\mu_m$ is tight, then $\mu_m\to \mu_\infty$ weakly almost surely. See also \cite{Lijoi2007}.

We also use Doob's martingale inequality, namely that if $(\psi_m)$ is a non-negative sub-martingale; i.e. $E\,(\psi_{m+1}\mid y_{1:m})\geq \psi_m$, then
$$\P\left(\max_{1\leq i\leq m} \psi_i>C\right)\leq C^{-1}\psi_0,$$
where $\psi_0=E\,\psi_m$.

\begin{lemma}
\label{lemma:G_tight}
    Assume $\{\theta_m\}$ and $\{w_m\}$ are convergent martingales with bounded variances. Then the sequence of probability measures $\{\bar{G}_m\equiv E(G_m\mid y_{1:n})\}$ is tight on $(\mathbb{R}^d,\mathcal{B}(\mathbb{R}^d))$.
\end{lemma}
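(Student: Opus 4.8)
The plan is to reduce tightness of the \emph{deterministic} sequence $\{\bar G_m\}$ to a single uniform tail estimate: it suffices to show that for every $\epsilon>0$ there is a radius $C=C(\epsilon)<\infty$ with $\sup_{m}\bar G_m\big(\{\theta\in\mathbb{R}^d:\|\theta\|>C\}\big)\le\epsilon$, since the closed ball $\{\|\theta\|\le C\}$ is compact in $\mathbb{R}^d$. So everything comes down to a uniform-in-$m$ first-moment bound for the atoms of $G_m$.

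First I would control the tail of the \emph{random} measure $G_m$ pathwise. Since $G_m=\sum_{j=1}^r w_{j,m}\,\mathbf 1(\theta=\theta_{j,m})$ is a probability measure, Markov's inequality applied to $G_m$ gives, for every $C>0$,
\[
G_m\big(\{\|\theta\|>C\}\big)\ \le\ \frac1C\int\|\theta\|\,dG_m(\theta)\ =\ \frac1C\sum_{j=1}^r w_{j,m}\,\|\theta_{j,m}\|,
\]
and taking $E(\,\cdot\mid y_{1:n})$ on both sides yields $\bar G_m(\{\|\theta\|>C\})\le C^{-1}\sum_{j=1}^r E\big(w_{j,m}\|\theta_{j,m}\|\mid y_{1:n}\big)$.

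Next I would bound $\sum_{j}E(w_{j,m}\|\theta_{j,m}\|\mid y_{1:n})$ uniformly in $m$. The weight recursion (\ref{equ:update_rule_weights}) keeps $(w_{j,m})_{j=1}^r$ a probability vector at every step, so $0\le w_{j,m}\le 1$ and hence $w_{j,m}\|\theta_{j,m}\|\le\|\theta_{j,m}\|$; this decouples the weights from the locations and it is enough to bound each $E(\|\theta_{j,m}\|\mid y_{1:n})$. By Cauchy--Schwarz, $E(\|\theta_{j,m}\|\mid y_{1:n})\le\big(E(\|\theta_{j,m}\|^2\mid y_{1:n})\big)^{1/2}$; and since, under \ref{regularity:existance}--\ref{regularity:differentiable}, $(\theta_{j,m})$ is a martingale started at the finite initial atom $\theta_{j,n}$, a component-wise bias--variance split gives
\[
E\big(\|\theta_{j,m}\|^2\mid y_{1:n}\big)=\|\theta_{j,n}\|^2+\sum_{s=1}^d\Var\big(\theta^{(s)}_{j,m}\mid y_{1:n}\big)\ \le\ \|\theta_{j,n}\|^2+B_j,
\]
where $B_j:=\sum_{s=1}^d\sup_m\Var(\theta^{(s)}_{j,m}\mid y_{1:n})<\infty$ by the hypothesis (the preceding computation exhibits an explicit summable bound for $\Var(\theta_{j,m}\mid y_{1:n})$, uniform in $m$). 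Summing over $j$ gives $\sup_m\sum_{j}E(w_{j,m}\|\theta_{j,m}\|\mid y_{1:n})\le\bar C:=\sum_{j=1}^r(\|\theta_{j,n}\|^2+B_j)^{1/2}<\infty$, so $\bar G_m(\{\|\theta\|>C\})\le\bar C/C$ for all $m$; choosing $C>\bar C/\epsilon$ completes the argument.

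The argument is essentially routine once one uses the decoupling $w_{j,m}\le1$; the single point needing care is that the variance bound must be \emph{uniform} in $m$, which is exactly what the hypothesis---and the explicit summable bound on $\Var(\theta_{j,m}\mid y_{1:n})$ obtained earlier---provides. If instead a high-probability ``all atom trajectories stay in one ball'' statement is wanted (convenient when invoking the \cite{Berti2006} criterion, which also needs the a.s.\ convergence $G_m(A)\to G_\infty(A)$), one can apply the quoted Doob maximal inequality to the submartingale $\|\theta_{j,m}\|^2$ to get $\P\big(\max_{i\le m}\|\theta_{j,i}\|>C\mid y_{1:n}\big)\le C^{-2}E(\|\theta_{j,m}\|^2\mid y_{1:n})$ and then union-bound over $j=1,\dots,r$; this is stronger than what tightness of $\{\bar G_m\}$ alone requires.
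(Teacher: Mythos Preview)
Your argument is correct and follows essentially the same route as the paper's own proof: bound $w_{j,m}\le 1$ to decouple weights from locations, then control the tail of each atom via a uniform second-moment bound $E(\|\theta_{j,m}\|^2\mid y_{1:n})=\|\theta_{j,n}\|^2+\Var(\theta_{j,m}\mid y_{1:n})$ coming from the martingale property. The only cosmetic difference is that the paper applies the (Doob/Markov) inequality directly to the submartingale $\theta_{j,m}^2$ to obtain a $C^{-2}$ tail, whereas your main line uses first-moment Markov plus Cauchy--Schwarz for a $C^{-1}$ tail---and you in fact describe the paper's exact variant in your closing remark.
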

\begin{proof}
    Recall, $G_m(\Theta)=\sum_{j=1}^r w_{j,m}1(\theta_{j,m}\in\Theta)$ for any set $\Theta\subset\mathbb{R}^d$  and so
    $$\bar{G}_m(\Theta)\leq \sum_{j=1}^r P(\theta_{j,m}\in\Theta). $$
    We take $\Theta_\epsilon=(-\infty, c_\epsilon)\cup (c_\epsilon,\infty)$ and, noting that $(\theta^2_{j,m})$ is a sub-martingale, from Doob's martingale inequality it is that
    $\bar{G}_m(\Theta)\leq r\,\max_j\{\bar\theta_{j,n}^2\}/c_\epsilon^2,$
    where $\bar{\theta}_{j,n}^2=E\,\theta_{j,m}^2$. We have previously established that these are finite for all $j$ and all $m$, so $\bar{G}_m(\Theta_\epsilon)\leq C/c_\epsilon^2$ for some $C<\infty$. We can therefore choose $c_\epsilon$ so $\bar{G}_m(\Theta_\epsilon)<\epsilon$ for all $m$, indicating the sequence is tight. 
\end{proof}

\begin{lemma}
\label{lemma:p_tight}
    Assume $G_m$ converges weakly to $G_\infty$ almost surely and the sequence $(\bar{G}_m)$ is tight. Further, assume for any $\delta>0$ there exists a set $A_\delta$ and a $c_\delta$ such that $K(A_\delta\mid \theta)>1-\delta$ for all $|\theta|<c_\delta$. Then
    $(P_m)$ converges weakly to $P_\infty$ with probability one.
\end{lemma}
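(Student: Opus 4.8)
The plan is to invoke the weak-convergence criterion of \cite{Berti2006} quoted just above the lemma, applied to the random probability measures $\mu_m = P_m$ with limit $\mu_\infty = P_\infty \defn \int K(\cdot\mid\theta)\,dG_\infty(\theta)$. So the two things I must check are: (a) $P_m(A)\to P_\infty(A)$ almost surely for $A$ ranging over a convergence-determining class, and (b) tightness of the deterministic sequence $(E\,P_m)$.

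For (a) I would fix a value $y$ and argue that $\theta\mapsto K(y\mid\theta)=\int_{u\le y}k(u\mid\theta)\,du$ is bounded by $1$ and continuous on $\Theta$; continuity follows from the regularity conditions \ref{regularity:existance}--\ref{regularity:differentiable}, since differentiability gives pointwise continuity of $k(u\mid\cdot)$ and \ref{regularity:differentiable} furnishes a local integrable dominating function, allowing one to pass to the limit under the integral. Then on the almost-sure event $\{G_m\to G_\infty \text{ weakly}\}$ one gets $P_m(y)=\int K(y\mid\theta)\,dG_m(\theta)\to\int K(y\mid\theta)\,dG_\infty(\theta)=P_\infty(y)$. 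Intersecting these events over a countable dense set $D$ of values of $y$ produces a single probability-one event on which $P_m(y)\to P_\infty(y)$ for all $y\in D$; the half-spaces indexed by $D$ form a convergence-determining class, and $P_\infty$ is a genuine probability measure because $G_\infty$ is.

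For (b) I would write $E\,P_m(\cdot)=\int K(\cdot\mid\theta)\,d\bar G_m(\theta)$. Given $\epsilon>0$, apply the hypothesis with $\delta=\epsilon/2$ to get a set $A_\delta$ — which I may take bounded, hence relatively compact, intersecting with a large ball if necessary, the kernel regularity making this harmless — and a radius $c_\delta$ with $K(A_\delta\mid\theta)>1-\epsilon/2$ whenever $|\theta|<c_\delta$; by the assumed tightness of $(\bar G_m)$ from Lemma~\ref{lemma:G_tight} I may further enlarge $c_\delta$ so that $\bar G_m(\{|\theta|\ge c_\delta\})<\epsilon/2$ for every $m$. Then
$$E\,P_m(A_\delta)\ \ge\ \int_{\{|\theta|<c_\delta\}} K(A_\delta\mid\theta)\,d\bar G_m(\theta)\ \ge\ \bigl(1-\tfrac\epsilon2\bigr)\,\bar G_m\bigl(\{|\theta|<c_\delta\}\bigr)\ \ge\ \bigl(1-\tfrac\epsilon2\bigr)^2\ >\ 1-\epsilon,$$
uniformly in $m$, which is exactly the tightness required. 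Combining (a) and (b), \cite{Berti2006} yields $P_m\to P_\infty$ weakly, almost surely.

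I expect the only real obstacle to be part (a): correctly identifying the weak limit as $\int K(\cdot\mid\theta)\,dG_\infty(\theta)$ and upgrading the pointwise almost-sure convergence $P_m(y)\to P_\infty(y)$ so that it holds simultaneously over a convergence-determining class. This hinges on the continuity of $\theta\mapsto K(y\mid\theta)$, which is the single place the kernel regularity \ref{regularity:existance}--\ref{regularity:differentiable} enters; part (b) is then a routine consequence of the $K(A_\delta\mid\theta)$ hypothesis together with the tightness of $(\bar G_m)$ already established.
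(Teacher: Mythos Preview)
Your proposal is correct and follows essentially the same route as the paper's proof: pointwise almost-sure convergence of $P_m(y)$ via boundedness and continuity of $\theta\mapsto K(y\mid\theta)$ together with $G_m\Rightarrow G_\infty$, and tightness of $(\bar P_m)$ by combining tightness of $(\bar G_m)$ with the $K(A_\delta\mid\theta)$ hypothesis, then concluding via the \cite{Berti2006} criterion. The only cosmetic differences are that you split with $(1-\epsilon/2)^2>1-\epsilon$ where the paper uses $(\sqrt{1-\epsilon})^2$, and you add the countable-dense-set step and the remark that $A_\delta$ can be taken bounded, both of which are refinements rather than departures.
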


\begin{proof}
    We use similar ideas with the proof of Lemma~\ref{lemma:G_tight}. Set $P_m$ and $K$ as the corresponding probability distributions for $p_m$ and $k$, respectively. Since $K(y\mid \theta)$ is bounded and continuous in $\theta$ for all $y$, it follows that
    $P_m(y)\to P_\infty(y)$ almost surely for each $y$. To show tightness we consider the sequence $\bar{P}_m=\int K(y\mid\theta)\,d\bar{G}_m(\theta)$.
    
     For any $\epsilon>0$, there exists a large enough $a_\epsilon$ such that $\bar{G}_m[-a_\epsilon,a_\epsilon]>\sqrt{1-\epsilon}$, see Lemma~\ref{lemma:G_tight}, and also take $a_\epsilon$ large enough so that there exists a compact set $A_\epsilon$ for which $K(A_\epsilon\mid\theta)>\sqrt{1-\epsilon}$  for all $|\theta|<a_\epsilon$. This is possible based on the assumption stated in the Lemma. Hence, $\bar{P}_m(A_\epsilon)>1-\epsilon$ for all $m$.
    So $\{\bar{P}_m\}$ is tight and therefore $(P_m)$ converges weakly to $P_\infty$ almost surely.
\end{proof}

The usual BB provides an exchangeable sequence for the $x_{n+1:\infty}$. According to the BBM Algorithm~\ref{algo:The Bayesian Bootstrap for Mixture Models (BBM)}, if only the weights get updated and the $\theta$ remain fixed, then the sequence $(y_{n+1:\infty}$
form a conditionally identically distributed (c.i.d.) sequence and are therefore asymptotically exchangeable; see \cite{Aldous_1985}, \cite{Berti_2004} and \cite{Fortini_2020}. This follows for the $y_{n+1:\infty}$ since the $(P_m)_{m>n}$ form a martingale sequence when the $\theta$ stay fixed. This guarantees the c.i.d. sequence since $E\,(P_m\mid P_l)=P_l$ for all $m>l\geq n$.

It is interesting to note that when only the weights change, the updating algorithm is identical to the Newton algorithm when started off at a discrete distribution; see \cite{Newton_1998} and ~\cite{Newton_1999}. These articles used the algorithm to estimate a distribution from the data rather than to use it for uncertainty quantification as we do. Moreover, we extend the algorithm to update the $\theta$. 
The Newton algorithm is given by, for $i=1,\ldots,n-1$, 
\begin{equation}\label{newton}
G_{i+1}(\theta)=(1-\eta_i)\,G_i(\theta)+\eta_i
\frac{\int_{-\infty}^\theta k(y_{i+1}\mid s)\,dG_i(s)}
{\int k(y_{i+1}\mid s)\,dG_i(s) },
\end{equation}
where the $(y_i)_{i=1:n}$ represent the observed data. Hence, if the start $G_0$ is discrete, it is seen that only the weights will get updated, the locations will not.

When the $\theta$ get updated alongside the weights, the sequence is no longer c.i.d. However, given that $(P_m)$ converges weakly almost surely, the asymptotic exchangeability remains.

\begin{definition}[Asymptotic Exchangeability]
\label{def:asymptotic_exchangeability}
    A sequence of random variables \(\left(y_{m}\right)\) is asymptotically exchangeable, if
$$
\left(y_{m+1}, y_{m+2}, \ldots\right) \stackrel{d}{\rightarrow}\,z=\left(z_{1}, z_{2}, \ldots\right)
$$
    for some exchangeable sequence \(z\) as $m\to\infty$.
\end{definition}

\begin{theorem}
\label{thm:Asymptotic Exchangeability}
    Suppose $y_{m+1}\sim p_{m}$ for every $m=n+1,n+2,\ldots$, where $p_{m}$ is updated using (\ref{equ:update_rule_weights})-(\ref{equ:update_rule_locations}). If $(P_m)$ converges weakly almost surely then the sequence $(y_{n+1},y_{n+2},\ldots)$ is asymptotically exchangeable.
\end{theorem}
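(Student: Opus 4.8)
The plan is to identify the limiting exchangeable sequence explicitly and then reduce asymptotic exchangeability to convergence of finite-dimensional distributions. Let $P_\infty$ be the random probability measure arising as the almost sure weak limit of $(P_m)$ (whose existence is the hypothesis), and let $z=(z_1,z_2,\dots)$ be obtained by first drawing $P_\infty$ and then drawing $z_1,z_2,\dots$ i.i.d.\ from $P_\infty$; by de Finetti's theorem $z$ is exchangeable. Since convergence in distribution on $\mathbb{R}^{\mathbb{N}}$ with the product topology is equivalent to convergence of all finite-dimensional marginals, and since the latter follows once one shows, for every $k\ge 1$ and all bounded continuous $f_1,\dots,f_k$,
\begin{equation}\label{eq:bbm-ae-target}
  E\Big[\prod_{j=1}^{k} f_j(y_{m+j})\Big]\ \longrightarrow\ E\Big[\prod_{j=1}^{k}\int f_j\,dP_\infty\Big] \quad\text{as }m\to\infty,
\end{equation}
it suffices to establish (\ref{eq:bbm-ae-target}). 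Here we write $\langle f,P\rangle:=\int f\,dP$ and let $\mathcal F_m=\sigma(y_{1:m})$, so that $y_{m+1}\mid\mathcal F_m\sim P_m$ with $P_m$ being $\mathcal F_m$-measurable.

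The crux is that, once the locations are updated, $(P_m)$ is no longer a martingale, so the nested conditional expectations do not telescope exactly into products of predictive measures, in contrast with the c.i.d.\ (weights-only) case. The way around this uses only the stated hypothesis: since $P_m\to P_\infty$ weakly a.s., the bounded real sequence $\langle f,P_m\rangle$ converges a.s.\ for each bounded continuous $f$, hence is a.s.\ Cauchy, so its increments $\delta_m(f):=\langle f,P_{m+1}\rangle-\langle f,P_m\rangle$ tend to $0$ a.s.; being dominated by $2\|f\|_\infty$ they also satisfy $E|\delta_m(f)|\to 0$ by bounded convergence. This is precisely what permits replacing $P_{m+j}$ by $P_m$ inside a conditional expectation at the cost of an $L^1$-negligible error.

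Concretely I would prove, by induction on $k$, the refinement
$$\Big\|\,E\big[\,\textstyle\prod_{j=1}^{k} f_j(y_{m+j})\ \big|\ \mathcal F_m\,\big]\ -\ \textstyle\prod_{j=1}^{k}\langle f_j,P_m\rangle\,\Big\|_{L^1}\ \longrightarrow\ 0\quad(m\to\infty),$$
the base case $k=1$ being the identity $E[f_1(y_{m+1})\mid\mathcal F_m]=\langle f_1,P_m\rangle$. For the inductive step: condition on $\mathcal F_{m+1}$; apply the inductive hypothesis (with index shifted by one) to the inner expectation over $y_{m+2},\dots,y_{m+k}$; replace every $\langle f_j,P_{m+1}\rangle$ by $\langle f_j,P_m\rangle$ using the increment bound together with the elementary estimate $\big|\prod_j a_j-\prod_j b_j\big|\le C\sum_j|a_j-b_j|$; and finally pull the now $\mathcal F_m$-measurable factor $\prod_{j\ge 2}\langle f_j,P_m\rangle$ out of the conditional expectation and use $E[f_1(y_{m+1})\mid\mathcal F_m]=\langle f_1,P_m\rangle$. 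Taking expectations gives $E\big[\prod_j f_j(y_{m+j})\big]=E\big[\prod_j\langle f_j,P_m\rangle\big]+o(1)$, and since $\prod_j\langle f_j,P_m\rangle$ is uniformly bounded and converges a.s.\ to $\prod_j\langle f_j,P_\infty\rangle$, bounded convergence yields (\ref{eq:bbm-ae-target}). Thus the finite-dimensional laws of $(y_{m+1},y_{m+2},\dots)$ converge to those of $z$, i.e.\ the generated sequence is asymptotically exchangeable with directing random measure $P_\infty$. The main obstacle throughout is exactly the loss of the martingale/c.i.d.\ structure; once the vanishing-increment estimate is in hand the rest is routine bookkeeping.
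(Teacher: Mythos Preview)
Your argument is correct. The paper, however, does not argue directly: its entire proof is a one-line appeal to Lemma~8.2(b) of \cite{Aldous_1985}, which states precisely that if $y_{m+1}\mid\mathcal F_m\sim P_m$ and $P_m$ converges weakly almost surely to a random probability measure $P_\infty$, then $(y_m)$ is asymptotically exchangeable with directing measure $P_\infty$. What you have written is, in effect, a self-contained proof of that lemma: the vanishing-increment observation $E|\langle f,P_{m+1}\rangle-\langle f,P_m\rangle|\to 0$ together with the telescoping/induction on $k$ is exactly the mechanism that drives Aldous's result. So the two routes end at the same place; yours is more transparent and avoids the external dependency (and makes explicit that no martingale or c.i.d.\ property is needed, only a.s.\ weak convergence of the predictives), while the paper's route is shorter by outsourcing the work to a classical reference.
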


The proof of the theorem is obtained by the almost sure convergence of $P_m$ according to  Lemma 8.2(b) in~\cite{Aldous_1985}.

Finally, in this section, we remark that if the distribution estimator from the data is taken to be continuous, using for example the Newton algorithm or some other approach (e.g. as in \cite{Chae_2018}), and the updating for $(\widehat{G}_m)_{m>n}$ is done using the Newton algorithm, i.e. (\ref{newton}), then the sequence $(\widehat{G}_m)$ is easily seen to be a martingale and hence the sequence $(y_{n+1:\infty})$ is c.i.d. An illustration of this is given in the next section.

\section{Illustrations}
\label{sec:illustration}

We start this section with simulated data using a normal kernel with a known variance. When the variance is unknown, the NPMLE fails since the kernel is unbounded as the variance goes to 0. In this case we need an alternative to the NPMLE. For this we use BIC to provide the number of components and then maximum likelihood for the locations and common variance estimators. The algorithm then proceeds as usual with this estimator. This is demonstrated on the well known Galaxy dataset which is typically modeled with an unknown common variance. Finally, we use another real dataset where the assumption is that the density is monotone decreasing on $(0,\infty)$ and so we use an exponential kernel. This also is unbounded and in this case we use the Newton algorithm, see (\ref{newton}) to provide the initial estimator of the mixing distribution. In this case the mixing distribution will be continuous rather than discrete.

\subsection{Simulated data}

For discrete $G$, we generate $n=100$ and $n=500$ samples independently from the Gaussian mixture model: $f(y)=\sum_{j=1}^3 w_j\,\N(y\mid \theta_j,0.1^2)$, where $\theta = (1, 3, 5)$, and $w = (0.2, 0.5, 0.3)$. We program in R with the package ``REBayes"~\cite{REBayes_R} to compute the Kiefer-Wolfowitz NPMLE for Gaussian location mixtures, \cite{Jiang_2009}, which is the same as~\cite{Chae_2018} when the kernel is Gaussian, and keep every atom whose weight is larger than $10^{-4}$, combined then with a normalization of the remaining weights. For continuous $G$, we generate $n=50$ and $n=500$ samples independently from the model $f(y;G)=\int k(y;\theta)\,dG(\theta),$ where $k(y;\theta)=\N(\theta, 0.1^2)$, and $G(\theta)$ is either standard normal, i.e. $\N(0, 1)$ or gamma, i.e. $\mathrm{Ga}(5,2)$.

We then complete BBM algorithms with 10000 iterations for each bootstrap sample and draw the CDFs of the results with different sample sizes, together with the CDFs of the original NPMLEs and the true distribution. See Figure \ref{pl:CDF_dis_loc_gradients_REBayes}. In Figure \ref{pl:CDF_dis_100_loc_gradient_REBayes}, we show the CDFs of the true sampling discrete distribution. The NPMLE has the number of atoms $r=7$ with sample size $n=100$. In Figure \ref{pl:CDF_dis_500_loc_gradient_REBayes}, we show the similar information with sample size $n=500$ and in this case $r=11$. Figure~\ref{pl:CDF_normal_50_loc_gradient_REBayes} and~\ref{pl:CDF_normal_500_loc_gradient_REBayes} shows 100 bootstrap samples of the CDFs for $G=\N(0,1)$ with $r=21$ and $r=38$, respectively, for sample sizes $n=50$ and $n=500$. Figure~\ref{pl:CDF_gamma_50_loc_gradient_REBayes} and~\ref{pl:CDF_gamma_500_loc_gradient_REBayes} shows 100 bootstrap samples of the CDFs for $G=\mathrm{Ga}(5,2)$ with $r=19$ and $r=35$, respectively, for sample sizes $n=50$ and $n=500$.

\begin{figure}[ht!]
\centering
\subfigure[]{
\label{pl:CDF_dis_100_loc_gradient_REBayes}
\includegraphics[width=0.45\textwidth]{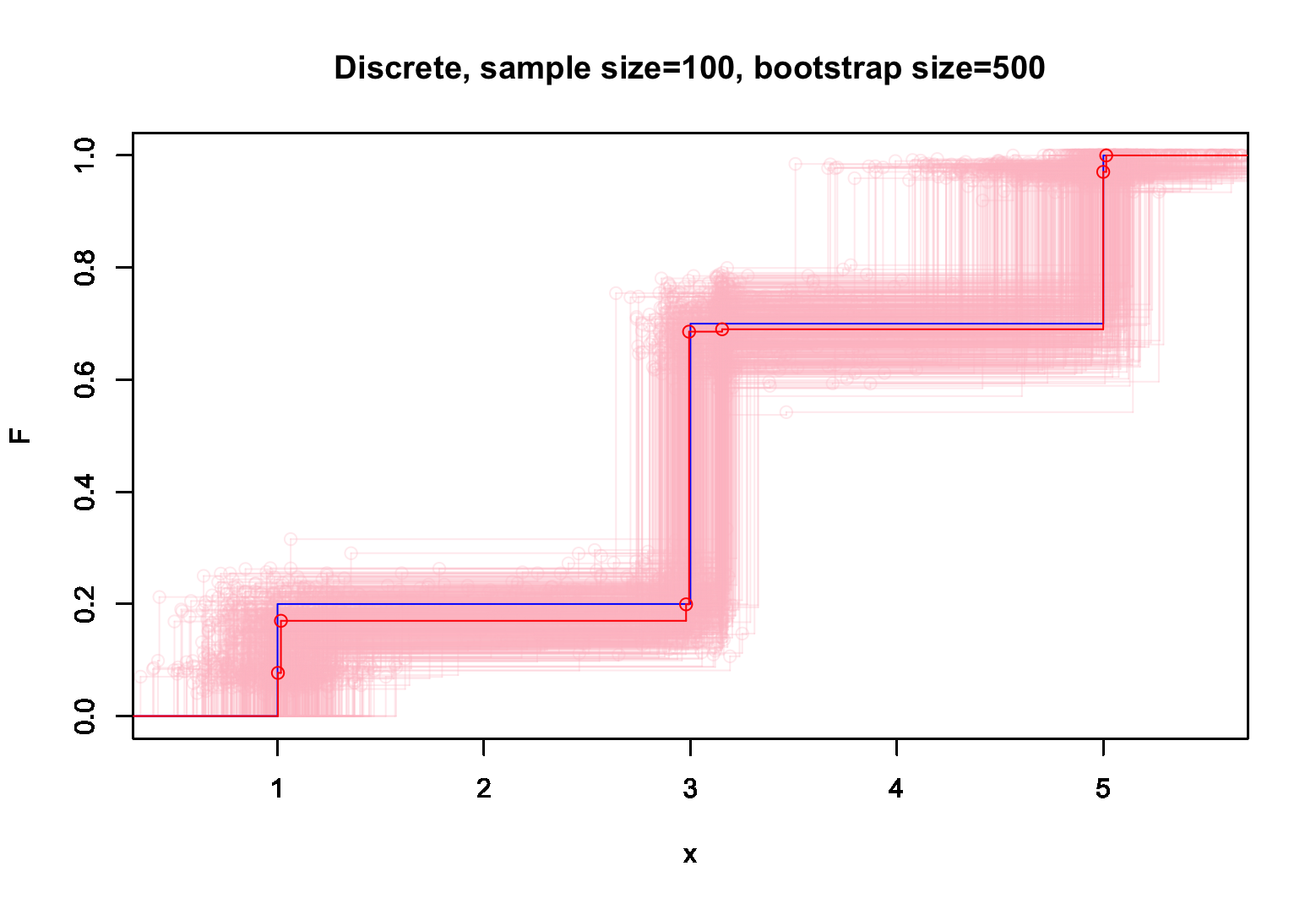}}
\subfigure[]{
\label{pl:CDF_dis_500_loc_gradient_REBayes}
\includegraphics[width=0.45\textwidth]{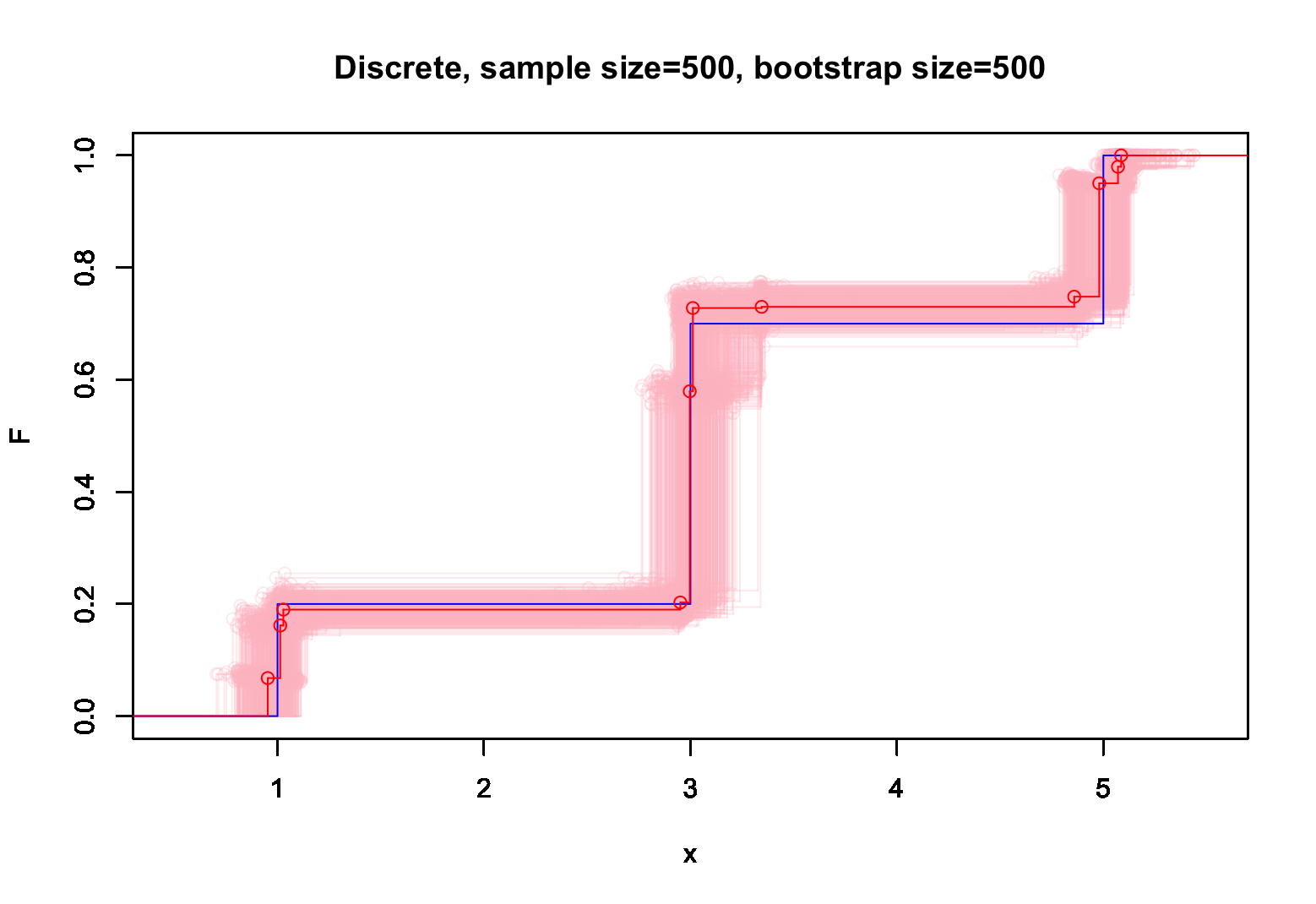}}
\subfigure[]{
\label{pl:CDF_normal_50_loc_gradient_REBayes}
\includegraphics[width=0.45\textwidth]{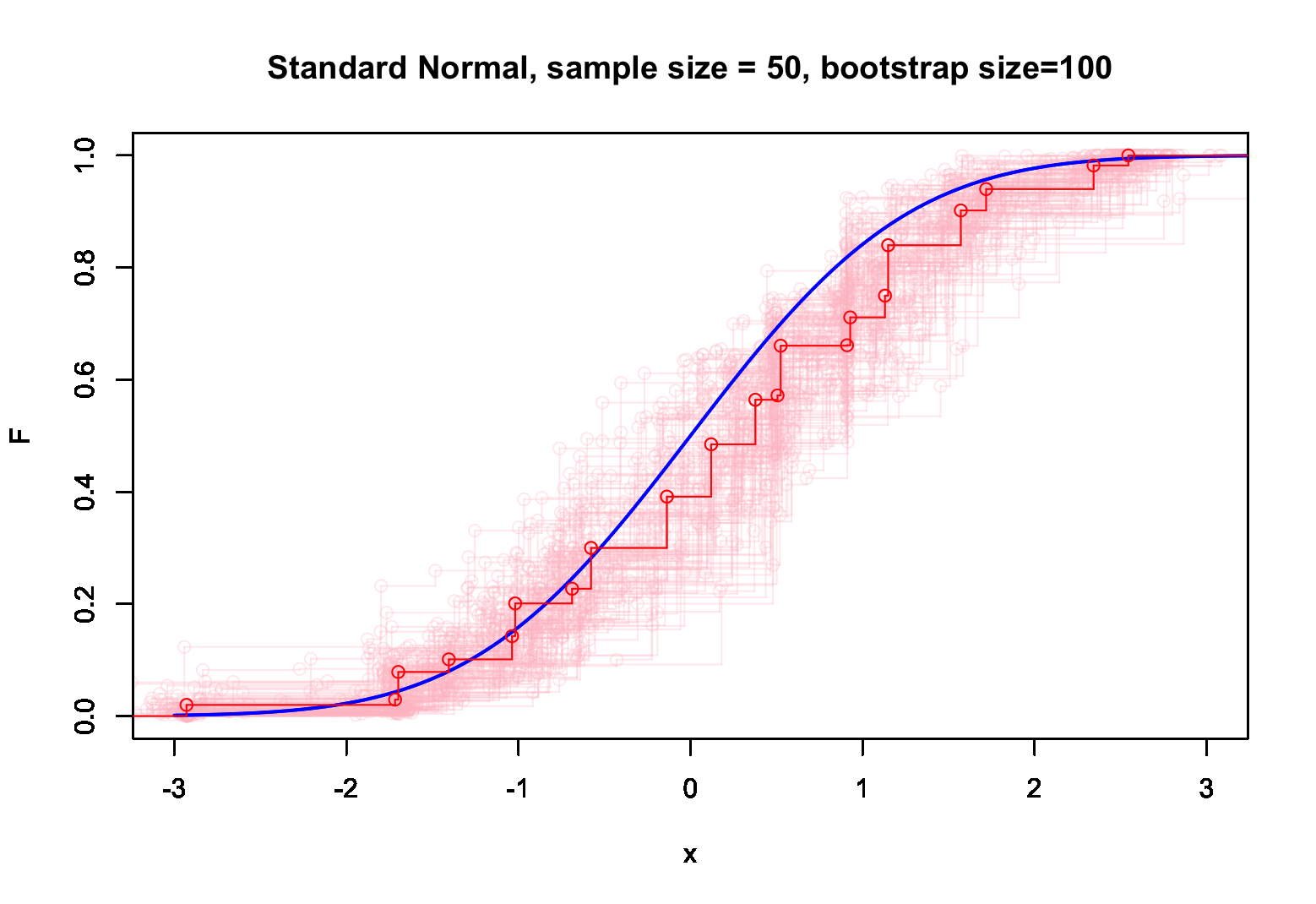}}
\subfigure[]{
\label{pl:CDF_normal_500_loc_gradient_REBayes}
\includegraphics[width=0.45\textwidth]{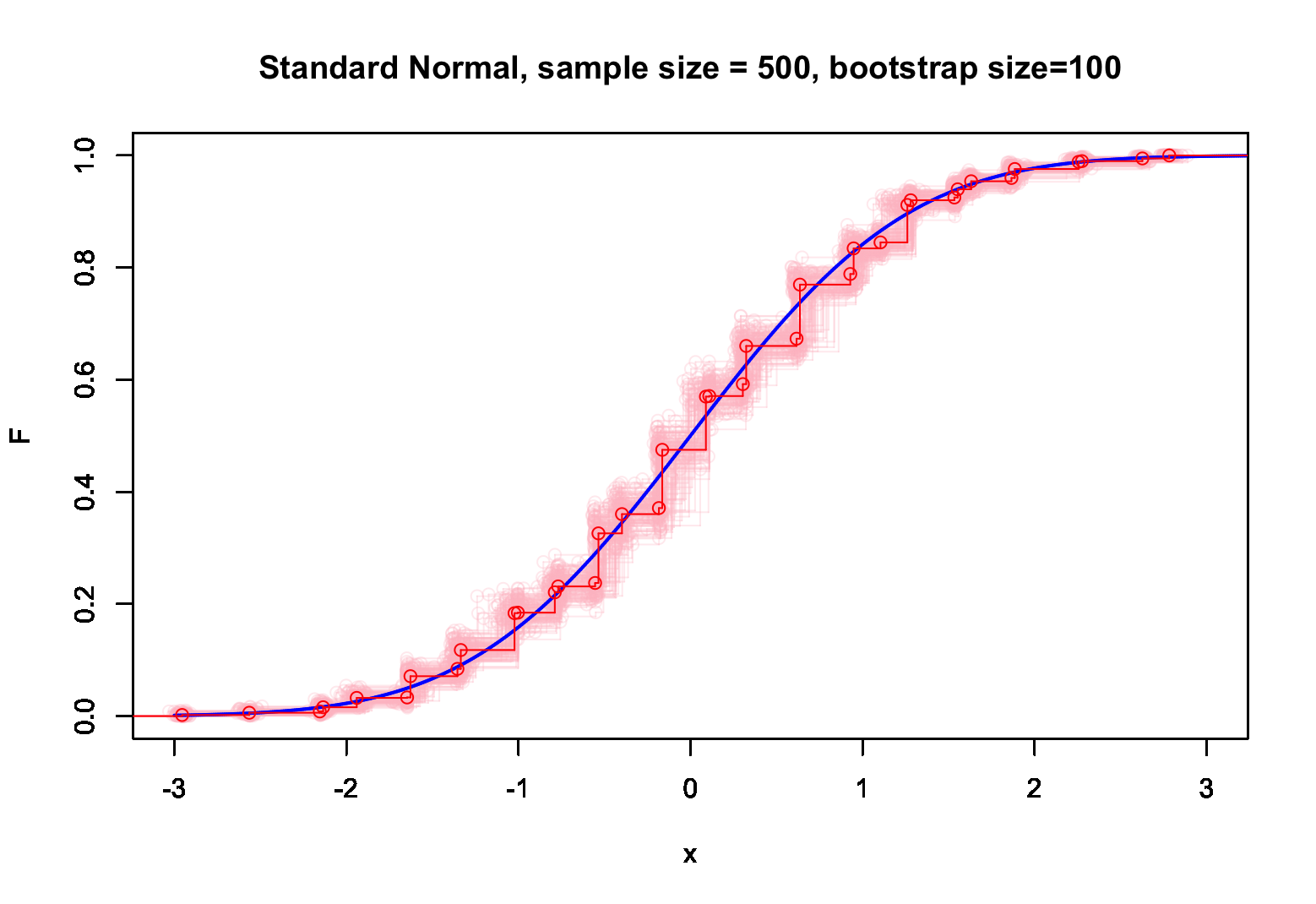}}
\subfigure[]{
\label{pl:CDF_gamma_50_loc_gradient_REBayes}
\includegraphics[width=0.45\textwidth]{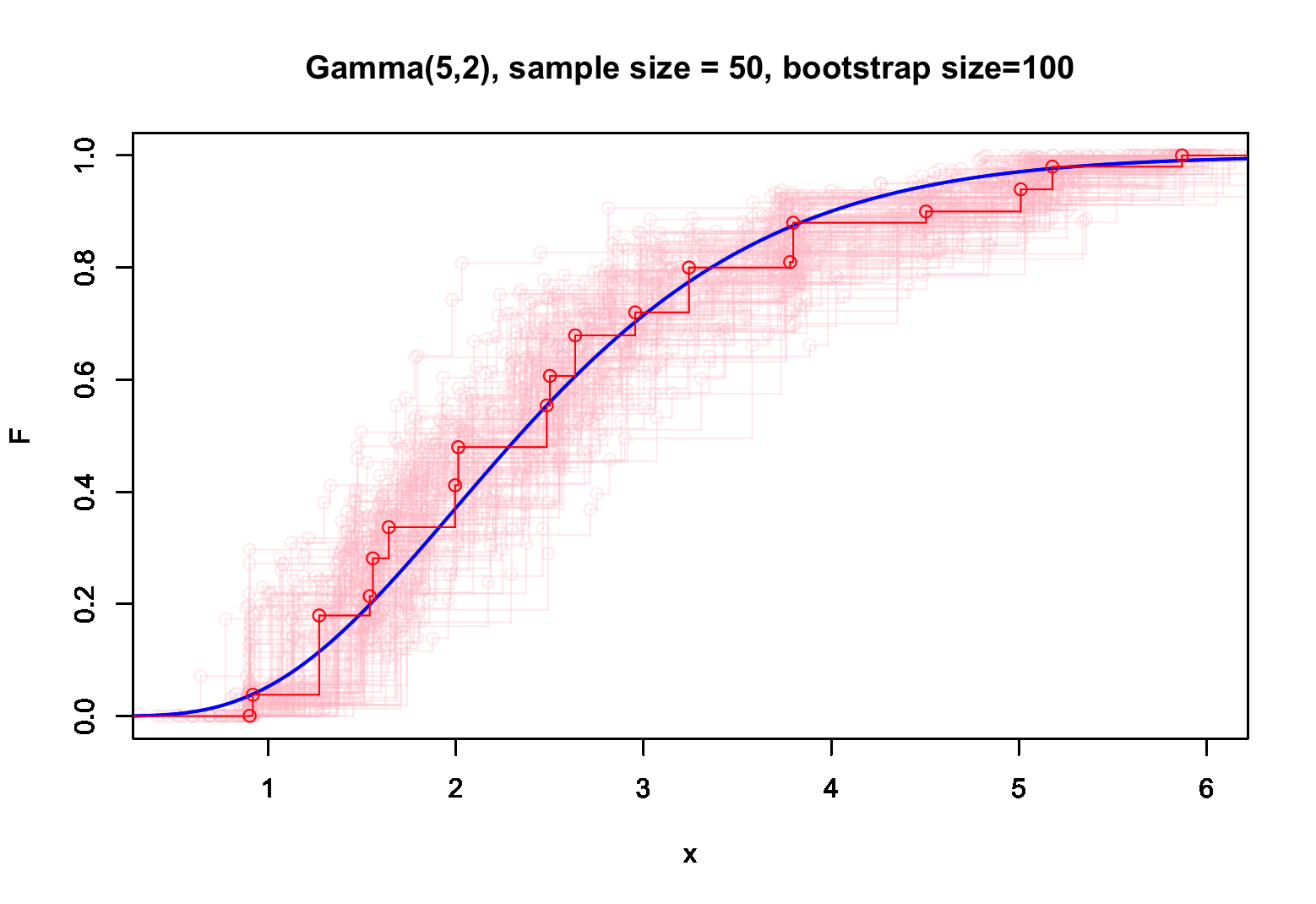}}
\subfigure[]{
\label{pl:CDF_gamma_500_loc_gradient_REBayes}
\includegraphics[width=0.45\textwidth]{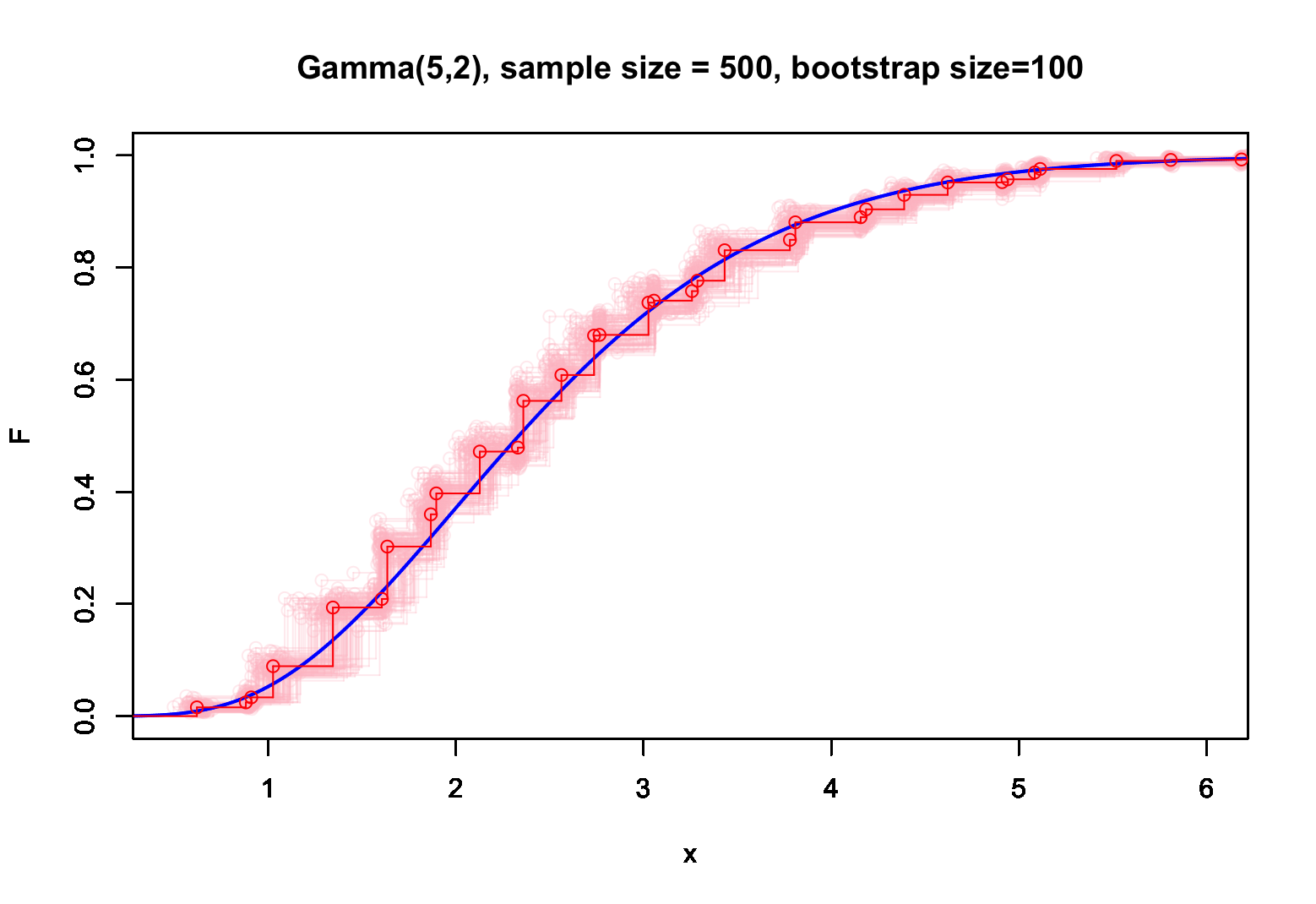}}
\caption{CDFs of BBM results using our gradient schemes. In both plots, the blue lines are CDFs of the true distribution, i.e. $G$, the red lines are CDFs of the NPMLEs, and the pink lines are CDFs of the martingale bootstrapping samples. The red and pink points represent the supports of the NPMLEs' and their bootstrapping samples' probability mass functions.}
\label{pl:CDF_dis_loc_gradients_REBayes}
\end{figure}

We see that the true distribution is included in the ranges of the bootstrap samples for both sample sizes and for all models, even when the NPMLE is not so close to the true distribution. As expected, the range of random CDFs is less for the larger sample sizes, indicating a less uncertainty.

\subsection{Galaxy data set}

The Gaussian kernel is also problematic for NPMLE when mixing over both the mean and variance, since the family of normal density functions is unbounded when the variance goes to 0. This is not solved by arbitrarily imposing a lower bound for the variance, since the NPMLE will simply pick out this variance to use as a point estimator. However, there is no compelling obligation to start off with the NPMLE and so, in this example, rather than using a Gaussian kernel mixed over both mean and variance, we assume a common variance for each normal component, a standard procedure, and we  use an information criterion to select the number of components. See, for example, \cite{Chen_1998} and \cite{McLachlan_2014}. Here under our Bayesian schemes, we naturally choose Bayesian information criterion (BIC).

Here we use a popular real data set of size $n=82$, which includes the velocities of 82 galaxies, see \cite{Venables_2002} and see Figure~\ref{fig:GMM_hist} for the histogram of the data set.
We assume a Gaussian kernel $k(y|\mu,\sigma^2)=\N(\mu,\sigma^2)$ with both $\mu$ and $\sigma^2$ unknown. If using NPMLE to get the estimate, $\sigma^2$ will go to zero and the number of kernels will become $n=82$.

\begin{center}
\begin{figure}[!htbp]
\begin{center}
\includegraphics[width=12cm,height=5cm]{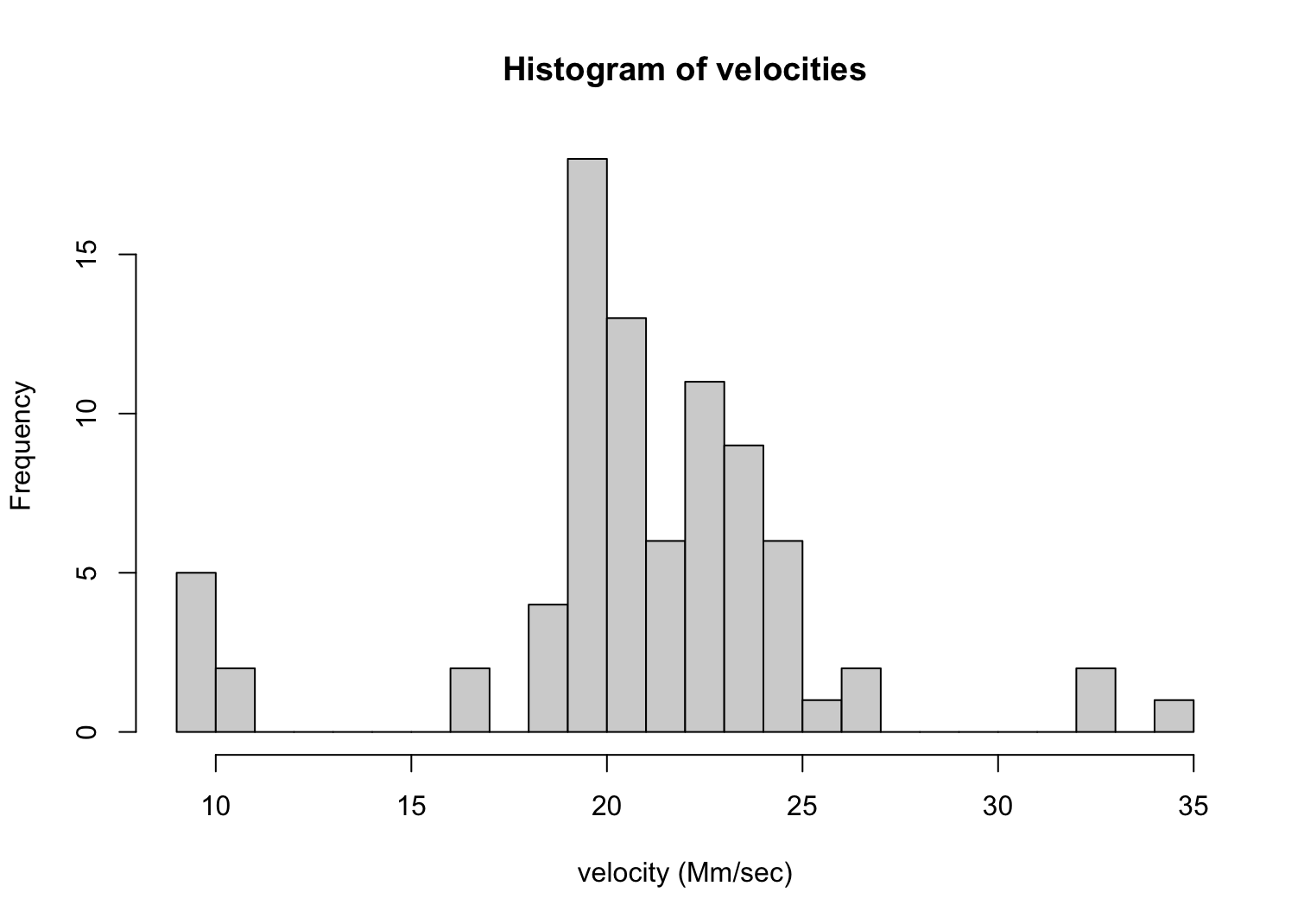}
\caption{Histogram of velocities of galaxies}
\label{fig:GMM_hist}
\end{center}
\end{figure}
\end{center}  

The mixture model is given by
$$
p_{\sigma,G}(y)=\int k(y|\mu,\sigma^2)\, dG(\mu),
$$
implying the variances $\sigma^2$ is the same for all kernels. We use information criteria to choose the number of kernels, which will provide the number of atoms for $G$. With this, we can then estimate the corresponding weights and atoms of $G$ and the variance $\sigma^2$ using standard algorithms, such as EM.

We use BIC, with criterion given by
\begin{align*}
    \mathrm{BIC} &= 2\log(\mathrm{Likelihood})-\log(n)d,
\end{align*}
where $d$ is the degrees of freedom in the model. We will find the maximum value of the information criterion to determine the number of atoms.
In the example (see Figure~\ref{fig:BIC}), BIC shows that the optimal number is 3. 

\begin{center}
\begin{figure}[!htbp]
\begin{center}
\includegraphics[width=12cm,height=5cm]{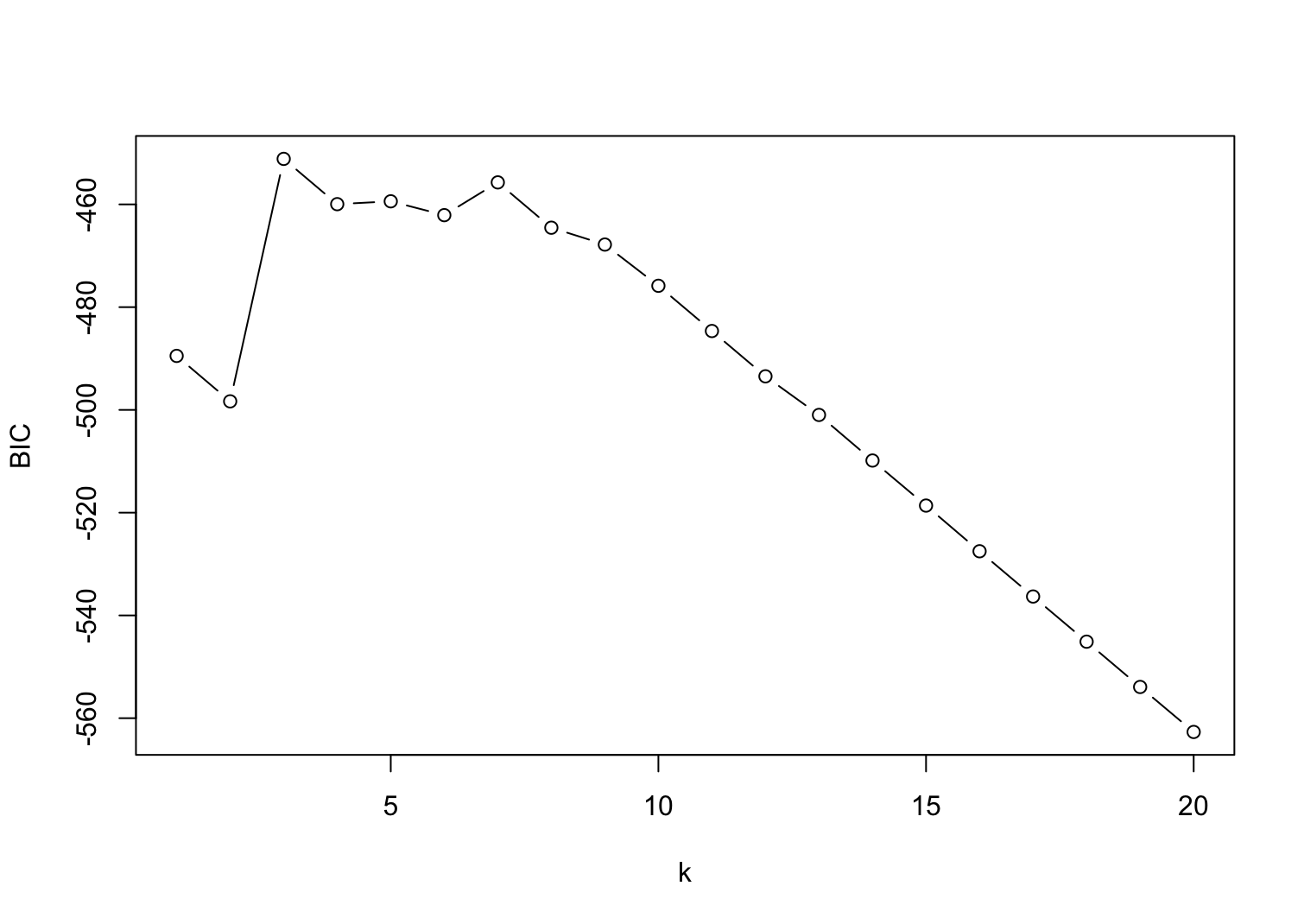}
\caption{BIC for Gaussian mixture model and Galaxy data}
\label{fig:BIC}
\end{center}
\end{figure}
\end{center} 

We can now use one of two schemes to get the weights and atoms for $G$ and $\sigma$. The first one is to use the EM algorithm. The alternative is estimate $\hat{\sigma}^2$ from the EM algorithm, and then to use the NPMLE for $G$ using the variance estimator. For ease and to ensure we recover the BIC number of components, we use the former of the two. 

The BBM procedure can again be regarded as the extension of Algorithm~\ref{algo:The Bayesian Bootstrap for Mixture Models (BBM)}, to incorporate the updating of $\sigma$. So $y_m$ is sampled from $p_{m-1}(y)=\int k(y\mid \mu,\sigma_{m-1}^2)\,d G_{m-1}(\mu)$ and the locations and weights of $G_{m-1}$ get updated as previously described. For the common variance, use the score function of $\sigma^2$ to perform the update, i.e.
$$
\sigma^2_{m} = \sigma^2_{m-1}+\beta_{m-1} \frac{\sum_{j=1}^r w_{j,m-1}\frac{\partial}{\partial \sigma} k(y_m\mid\mu_{j,m-1},\sigma^2_{m-1})}{\sum_{j=1}^r w_{j,m-1} k(y_m\mid\mu_{j,m-1},\sigma^2_{m-1})}.
$$
Figure~\ref{fig:GMM} presents 500 samples of random density functions, i.e. $p_\infty$, from the BBM method.

\begin{center}
\begin{figure}[!htbp]
\begin{center}
\includegraphics[width=12cm,height=5cm]{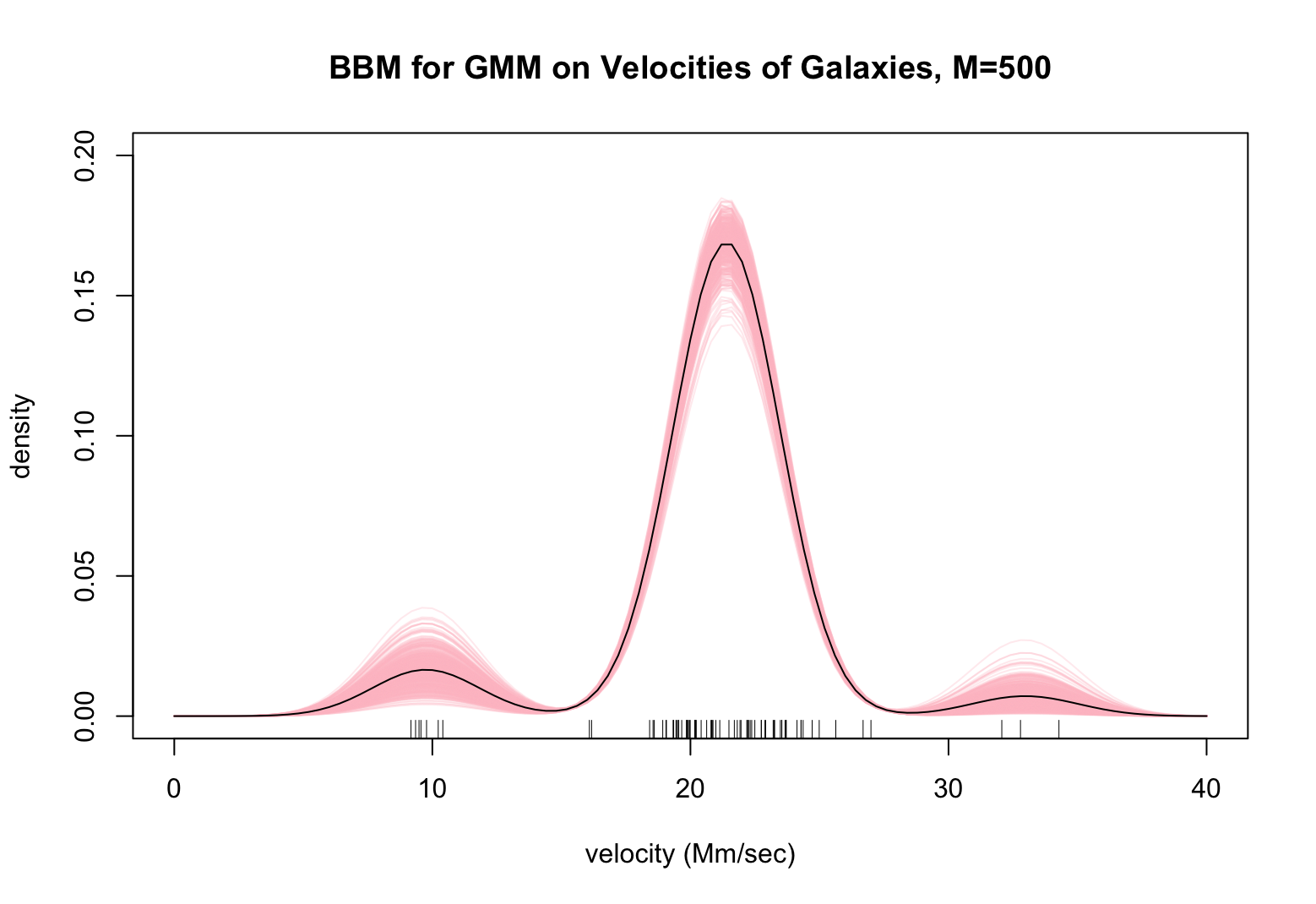}
\caption{Random distribution functions $p$ from BBM on velocities of galaxies}
\label{fig:GMM}
\end{center}
\end{figure}
\end{center}

\begin{center}
\begin{figure}[!htbp]
\begin{center}
\includegraphics[width=12cm,height=5cm]{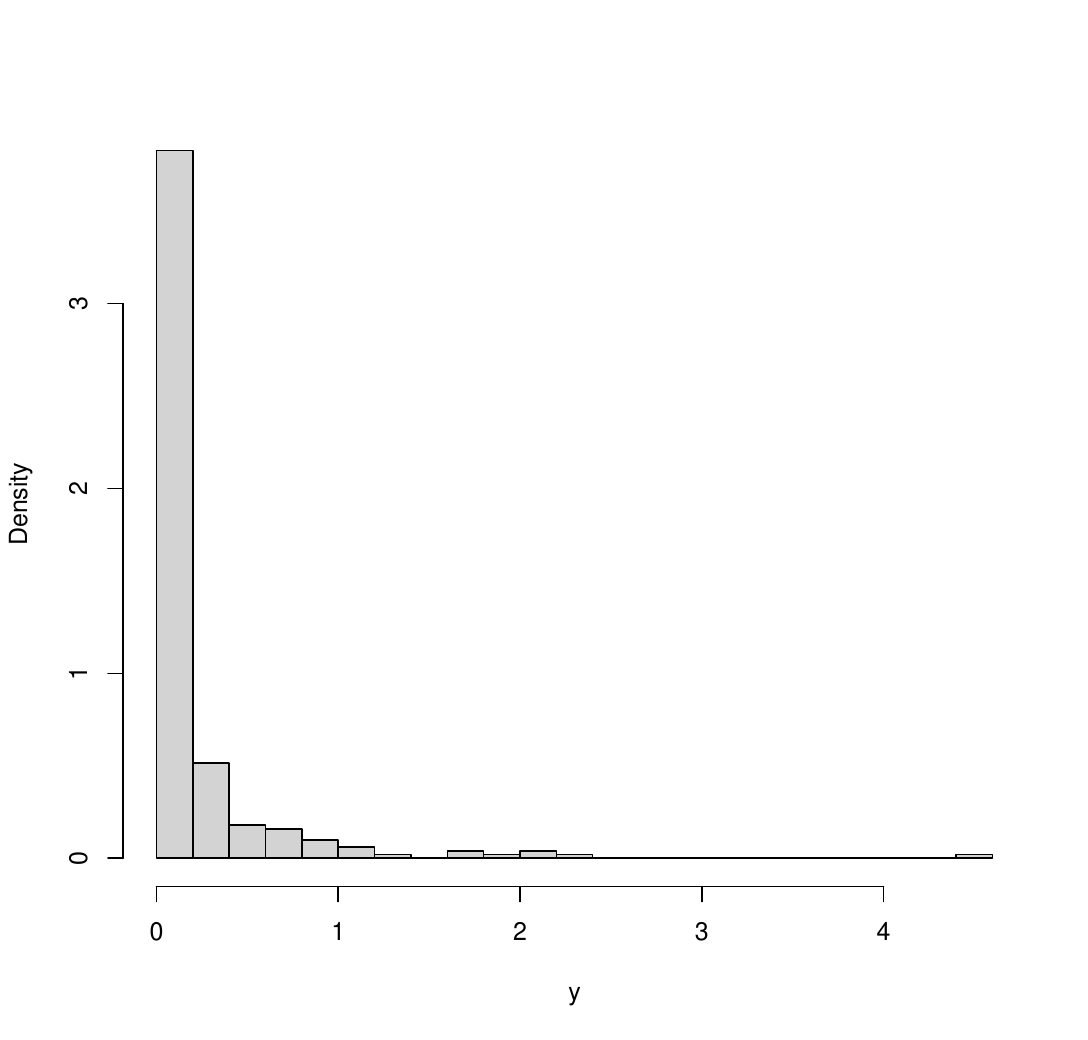}
\caption{Histogram of square log returns}
\label{fign}
\end{center}
\end{figure}
\end{center}

\subsection{Real data: squared log-returns} In this section we analyse a real data set of size $n=252$. A histogram of the data is given in Fig.~\ref{fign} which involves a sample of size $n=252$ of square log returns. We assume the density is decreasing and therefore use a mixture of exponential model. Since the exponential model; i.e. $k(y\mid\theta)=\exp(-y/\theta)/\theta$ is unbounded, we are unable to obtain the NPMLE as the initial estimator. Instead we employ the Newton algorithm given in (\ref{newton}) in order to obtain the estimator. The bootstrap procedure now would be a continuation of the Newton algorithm for $m>n$ to obtain the martingale sequence $(G_m)$, whereby $y_m$ is sampled from
$p_{m-1}(y)=\int k(y\mid\theta)\,dG_{m-1}(\theta)$ and then we update $G_m$ to $G_{m+1}$ using the $y_{m}$ sampled and the update provided by (\ref{newton}).

\begin{center}
\begin{figure}[!htbp]
\begin{center}
\includegraphics[width=12cm,height=5cm]{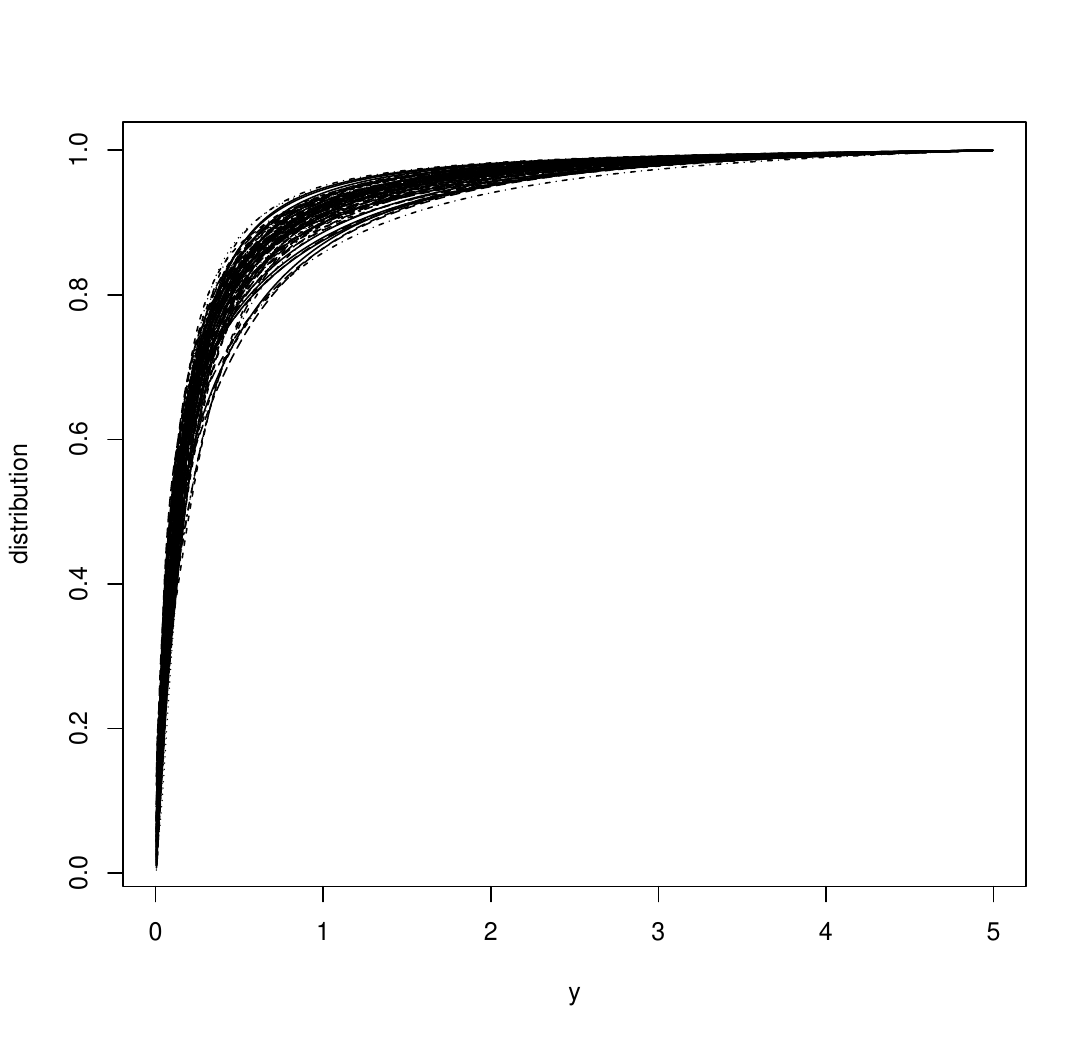}
\caption{Random distribution functions $G$ from BBM}
\label{figo}
\end{center}
\end{figure}
\end{center}  

In Fig.~\ref{figo} we present 100 samples of random distribution functions for $G$ from the BBM. Each run actually does not start at the same Newton estimator, but randomizes the order of the data to get a slightly different estimator for each random distribution. This is to remove the dependence on the order arbitrarily assigned to the data.

\section{Summary and Discussion}
\label{sec:summary_and_discussion}
In this paper we have expanded the class of Bayesian bootstrap methods to cover a nonparametric mixture model. The idea is developed from the P\'olya-urn model and the original Bayesian bootstrap, by representing them as a stochastic gradient optimization solver, along with the construction of martingales. The difference in practice between the BB and BBM is the switch from a point mass kernel to a continuous kernel. There are some other necessary adaptions but the essence of the algorithm remains the same. 
In short, the key is to see the P\'olya-urn process as a stochastic gradient optimization solver. 

In slightly more detail, the BB updates weights using a stochastic gradient algorithm and a point mass kernel, while fixing the number of atoms. The number of atoms when relevant also remain fixed. When we introduce a kernel, the weights are updated as with the BB, though now we also need to update the location of the kernels, which is also done using a stochastic gradient algorithm.

If the kernel is bounded we can use the NPMLE estimator for the data estimated distribution $\widehat{G}$. When it is unbounded we need an alternative estimator for which we use either the Newton algorithm or the BIC. If we use the former, which will result in a continuous $\widehat{G}$, the Newton algorithm for generating and updating from $m>n$ can be used, which results in $(y_{n+1:\infty})$ being a conditionally identically distributed sequence.

It is quite evident that the uncertainty quantification approach using the BBM is far more straightforward than the use of say MCMC methods. Moreover, the BBM output can be coded in parallel. From a theoretical perspective,  we have shown the convergence and (asymptotic) exchangeability of the algorithms.

\bibliographystyle{abbrv}
\bibliography{main_version}

\end{document}